\newif\ifdraft \drafttrue
\newif\iffull \fulltrue
\definecolor{DarkGreen}{rgb}{0.1,0.5,0.1}
\definecolor{DarkRed}{rgb}{0.5,0.1,0.1}
\definecolor{DarkBlue}{rgb}{0.1,0.1,0.5}
\newcommand{\ar}[1]{\ifdraft \textcolor{red}{[AR: #1]}\fi}
\newcommand{\ie}{{\it i.e.}}
\newcommand{\BEAS}{\begin{eqnarray*}}
\newcommand{\EEAS}{\end{eqnarray*}}
\newcommand{\BEA}{\begin{eqnarray}}
\newcommand{\EEA}{\end{eqnarray}}
\newcommand{\BEQ}{\begin{equation}}
\newcommand{\EEQ}{\end{equation}}
\newcommand{\BIT}{\begin{itemize}}
\newcommand{\EIT}{\end{itemize}}
\newcommand{\BNUM}{\begin{enumerate}}
\newcommand{\ENUM}{\end{enumerate}}
\newcommand{\eps}{\epsilon}
\newcommand{\Mpna}{\mathcal{M}_{\mathcal{P}, n, \alpha, \beta, \epsilon}}
\newcommand{\pd}{\mathcal{P}}
\newcommand{\hbi}{\hat{b}_i}
\newcommand{\tp}{\tilde{p}}
\newcommand{\pbi}{p_{b_i}}
\newcommand{\phbi}{p_{\hbi}}
\newcommand{\pz}{p_0}
\newcommand{\po}{p_1}
\newcommand{\E}{\mathbb{E}}
\DeclareMathOperator*{\argmin}{arg\,min}
\newtheorem{assumption}{Assumption}[section]
\newtheorem{definition}{Definition}[section]
\newtheorem{lemma}{Lemma}[section]
\newtheorem{proposition}{Proposition}[section]
\newtheorem{theorem}{Theorem}[section]
\newtheorem{remark}{Remark}[section]
\begin{document}
\title{Buying Private Data without Verification}

\author{
Arpita Ghosh\thanks{Cornell University} \and
Katrina Ligett\thanks{Caltech. Supported in part by an NSF CAREER award (CNS-1254169), the US-Israel Binational Science Foundation (grant 2012348), the Charles Lee Powell Foundation, a Google Faculty Research Award, and a Microsoft Faculty Fellowship.} \and
Aaron Roth\thanks{University of Pennsylvania. Supported in part by an NSF CAREER award, under
NSF grants CCF-1101389 and CNS-1065060, and a Google Focused Research Award.} \and
Grant Schoenebeck\thanks{University of Michigan} 
}
\maketitle

\begin{abstract}
We consider the problem of designing a survey to aggregate non-verifiable information from a privacy-sensitive population: an analyst wants to compute some aggregate statistic from the private bits held by each member of a population,  but cannot verify the correctness of the bits reported by participants in his survey. Individuals in the population are strategic agents with a cost for privacy, \ie, they not only account for the payments they expect to receive from the mechanism, but also their privacy costs from any information revealed about them by the mechanism's outcome---the computed statistic as well as the payments---to determine their utilities.
How can the analyst design payments to obtain an accurate estimate of the population statistic when individuals strategically decide {\em both} whether to participate and whether to truthfully report their sensitive information?

We design a differentially private peer-prediction mechanism~\cite{MRZ05} that supports {\em accurate} estimation of the population statistic as a Bayes-Nash equilibrium in settings where agents have explicit preferences for privacy. The mechanism requires knowledge of the marginal prior distribution on bits $b_i$, but does not need full knowledge of the marginal distribution on the costs $c_i$, instead requiring only an approximate upper bound. Our mechanism guarantees $\epsilon$-differential privacy to each agent $i$ against any adversary who can observe the statistical estimate output by the mechanism, as well as the payments made to the $n-1$ other agents $j\neq i$. Finally, we show that with slightly more structured  assumptions on the privacy cost functions of each agent ~\cite{CCKMV13}, the cost of running the survey goes to $0$ as the number of agents diverges.
\end{abstract}

\section{Introduction}

Consider the problem faced by a researcher who would like to compute some (unknown) statistic about a target population: for example, the prevalence of AIDS among university professors. He could, of course, run a survey and `just ask'; the obstacle he faces is that an individual's AIDS status is sensitive data, and the individuals being surveyed may be concerned that some harm might befall them if they were to participate in such a survey. This is a problem for the researcher, since individuals are not obligated to participate in his survey---to solve this problem, he might offer a (possibly different) payment to each participant to compensate them for such concerns and offset their `privacy' costs from participating in his survey. The researcher still has a problem, though---even if he manages to recruit individuals to participate in his survey (for instance, incentivized by the payment offered for participation), these participants are still not obligated to truthfully report their private data, {\em nor} does our researcher have any direct means to verify the truth of their responses.

The ``sensitive surveyor's problem''~\cite{GR11} attempts to model the problem of procuring data from individuals, each with a cost of privacy, in order to estimate some aggregate statistic
of a population, and has since been studied in some depth \cite{GR11,FL12,RS12,LR12,GL13,NVX14}. This stream of work, however, has thus far only modeled settings in which the data collected from each participant are {\em verifiable}, and the only information that a participant can misreport is her privacy cost determining the compensation she requires for participation---that is, an individual may choose to not participate in the survey, but if participating, she cannot lie about her data.\footnote{For instance, her data will be automatically collected from some database once she consents to its use, or equivalently, false reports can easily be identified, in which case not compensating individuals for false reports effectively eliminates the incentive to participate and yet lie about the sensitive data.}
As evident from our cartoon AIDS survey example, however, there are clearly also scenarios where a researcher {\em cannot} directly verify the truth of the responses to his survey. A researcher in such a scenario faces an additional problem in computing a reliable estimate of an aggregate statistic beyond cheaply eliciting participation---recruiting a large enough population of respondents does not automatically lead to representative estimates because individuals may still lie about their data, compromising the accuracy of the estimate. How can payments be designed to ensure that the survey produces an accurate estimate of the population statistic, when individuals strategically decide both whether or not to participate {\em and} whether or not to truthfully report their sensitive data, accounting both for the payments they expect to receive from the mechanism and their privacy costs from the mechanism's outcome?

The problem of eliciting unverifiable\footnote{\ie, where agents' reports cannot be compared against some ground truth for verification} information from strategic agents is addressed by the peer prediction mechanism~\cite{MRZ05}, which uses proper scoring rules to design payments such that truthful reporting is a (Bayes)-Nash equilibrium of the mechanism.
Agents in~\cite{MRZ05}, however, have no concern for privacy---they do not care about what the outcomes computed by the mechanism might reveal about their reports, and act only to maximize their expected payment from the mechanism. The peer-prediction mechanism, therefore, does not directly address the following problem of eliciting information from privacy-sensitive individuals.

Consider an abstract model of such a privacy-sensitive population: each agent in the population has a private (sensitive) bit $b_i$, as well as a parameter $c_i$ which bounds her cost for privacy as follows. If the mechanism that agents interact with is $\epsilon$-differentially private~\cite{DMNS06}, then an agent's cost (which is potentially distinct for each outcome) is at most, though not necessarily exactly, $\epsilon c_i$. The bit-cost pairs $(b_i,c_i)$ for each individual are drawn from a (not necessarily independent) commonly known prior distribution. Implementing a peer prediction mechanism in such a setting faces two main obstacles:
\begin{enumerate}
\item The way that the peer prediction mechanism incentivizes truth-telling despite unverifiability is by rewarding a participant's stated bit's correlation with that of another participant. 
A participating agent $i$ in the peer prediction mechanism of \cite{MRZ05} is paired with a uniformly randomly selected agent $j$, and paid as a deterministic function of her own reported bit $\hat{b}_i$, as well as the reported bit $\hat{b}_j$ of agent $j$. Such a payment rule is inherently disclosive;\footnote{since it is computed deterministically as a function of the reported bit of a single individual $j$} for the payments to satisfy differential privacy, we must instead base them on {\em perturbed aggregates} of all players' reports. Relatedly, if the privacy costs $c_i$ are  drawn from a distribution with unbounded support, no finite level of payment will encourage full participation with truthful reporting, in contrast to \cite{MRZ05}. Hence, the derivation of Bayes-Nash equilibrium conditions is more delicate: we require players to be strictly incentivized to report their true bit, which can be interpreted as a prediction of the average value of all other reported bits, but this incentive must be robust to the noise that is explicitly added to the computed aggregate (which determines the payment) in order to preserve privacy, as well as to the error in the estimate from lack of full participation.
\item As mentioned above, in order to guarantee differential privacy, we require that the output (and payments) we compute be perturbed, which introduces error in our estimate of the statistic that we wish to compute. To obtain some fixed level of accuracy, therefore, we must deal with a tradeoff: on the one hand, increasing the perturbation rate directly decreases accuracy by virtue of the noise we add. On the other hand, it increases privacy, which can in turn increase participation, which \emph{increases} accuracy. In order to find an equilibrium that matches our accuracy goals, we must manage this tradeoff.
\end{enumerate}

In this paper, we give a differentially private peer prediction mechanism that supports the {\em accurate} estimation of a population statistic as a Bayes Nash equilibrium, even in settings where agents have explicit preferences for privacy. Our mechanism, which guarantees $\epsilon$-differential privacy to each agent $i$ against any adversary who can observe the statistical estimate output  by the mechanism as well as the payments made to the $n-1$ other agents $j\neq i$ (but not the payment made to agent $i$,\footnote{We assume that the surveyor is trusted, and that
agents are concerned about maintaining privacy from the other people
in the survey, as well as from external observers of the survey
outcome. We think of payments made to a particular agent as
confidential, so that an outside observer, together with a coalition
of survey participants, could in the worst case learn the outcome of
the survey, as well as the payments made to all players $j \neq i$, and use these to
form inferences about agent i's private data. (Note that if the
adversary could see the payment made to agent $i$, then it is not hard
to see that almost nothing can be done to preserve privacy: observing the payment of
player $i$ reveals that her cost for participating is lower than the
payment, which is a violation of differential privacy (of the costs)).}) requires knowledge of the marginal prior distribution on bits $b_i$, but need only know a crude upper bound on the marginal distribution on costs $c_i$. We show that with a slightly more specific assumption about agent privacy costs~\cite{CCKMV13}, the mechanism's total cost of compensating players for their privacy goes to zero as the number of agents diverges, while supporting accurate estimation of the statistic in equilibrium.


\subsection{Related Work}
There are two main strands of related work: that in the ``sensitive surveys'' literature, and that in the peer prediction literature.

\subsubsection{Privacy Aware Surveys}
The problem of estimating a population statistic among strategic agents who have explicit costs for privacy was introduced by Ghosh and Roth \cite{GR11}, who considered the problem of designing a prior-free direct revelation mechanism for this task. They considered two settings: one in which agents' costs $c_i$ can be correlated with their private bits $b_i$ (and hence the costs themselves can potentially be disclosive, and computations on them can harm privacy), and another in which agents' costs are independent from their private bits (or at least agents do not regard themselves as experiencing privacy costs due to computations done only on their reported $c_i$ values). In the first (more interesting) model, \cite{GR11} prove an impossibility result, showing that no individually rational mechanism that makes finite payments can achieve a non-trivial estimate of the underlying population statistic. In the second (less realistic) model, \cite{GR11} give dominant strategy truthful mechanisms for either a) producing a statistic to some target accuracy $\alpha$, with optimal payments relative to an envy-free benchmark, or b) producing a maximally accurate statistic given a fixed payment budget.

Much of the subsequent work has focused on circumventing and extending the impossibility result that \cite{GR11} prove in the setting in which agents' bits and costs are correlated. Fleischer and Lyu \cite{FL12} consider a Bayesian setting in which there are two known prior distributions on costs, $D_0$ and $D_1$. Agents with $b_i = 1$ have their cost drawn from $D_1$, and agents with $b_i = 0$ have their costs drawn from $D_0$. The mechanism knows the priors, but does not know the proportion of individuals with $b_i = 1$. \cite{FL12} give a clever contracting scheme that makes the distribution on agent participation decisions independent of their bit, and from this argue that they can incentivize agent participation even in the presence of correlations between bits and costs. This mechanism, however, does require that these priors be known exactly to the mechanism: both the privacy guarantee and the truthfulness guarantee depend on the accuracy of these priors.

Ligett and Roth \cite{LR12} in contrast give a model in which the impossibility result of \cite{GR11} is circumvented by relaxing the individual rationality requirement. In this model, the mechanism has the power to \emph{observe non-participation decisions}---which, when costs and bits are correlated, can result in privacy loss for the non-participating agents. However, \cite{LR12} only require that the mechanism make (individually rational) payments to participating agents, and not to non-participating ones.

Most recently, an elegant paper of Nissim, Vadhan, and Xiao \cite{NVX14} revisits this problem and offers two results. First, they significantly strengthen the impossibility result of \cite{GR11} to hold under a much weaker set of assumptions. Second, they circumvent the impossibility result by making the natural assumption that agent costs are positively correlated with their bit: that is, they make the assumption that there is a clear ``more sensitive'' bit value, such that an agent whose bit flips from $b_i = 0$ to $b_i = 1$ will only have a higher, and never lower cost for privacy. By promising privacy only for agents that have this property, they are able to circumvent the impossibility theorem.

Common to all these papers (which are surveyed in \cite{PR13}, along with a broader set of work on the intersection of privacy and mechanism design) is the assumption that agent bits $b_i$ are \emph{verifiable}, and that agents only have the ability to misreport their costs $c_i$. This represents the main departure of the present paper from previous work on the sensitive surveyor's problem: we allow agents to mis-report their bit $b_i$.

\subsubsection{Privacy Cost Functions}
Any work that seeks to model strategic agents in the presence of privacy concerns must grapple with the problem of how to model agents' costs for privacy. \emph{Differential privacy}, introduced by Dwork, McSherry, Nissim, and Smith \cite{DMNS06} provides a formal measure which gives a linear upper bound on the degree to which an agent can decrease his expected future utility via his decision to participate in a mechanism (see \cite{GR11} for further discussion of this). 
\cite{GR11} use a model that assumes agent costs are \emph{exactly} linear: that they experience cost $c_i\epsilon$ for some $c_i$ when their data are used in an $\epsilon$-differentially private way.  Nissim, Orlandi, and Smorodinsky \cite{NOS12} propose assuming that agents' privacy costs can be arbitrary, but \emph{upper bounded} by some linear cost $c_i\epsilon$. This is the most general assumption in the literature, and the one that we adopt for most of our paper.

Chen et al.~\cite{CCKMV13} propose and justify an outcome-based measure (also based on differential privacy) which usually results in much lower (sublinear) costs, which have several nice analytic properties.  We also adopt this model in Section~\ref{sec:alternative} as an alternative (and relaxed) definition of privacy.

\subsubsection{Peer Prediction}
The peer-prediction method, introduced by Miller, Resnick and Zeckhauser~\cite{MRZ05}, is a mechanism for truthfully eliciting information from agents in the absence of a verifiable ground truth against which to compare agents' reports. The mechanism in~\cite{MRZ05} uses proper scoring rules to reward agents for making reports that are predictive of other agents' reports, to ensure that truthtelling is a Nash equilibrium. There has since been a stream of work on several variants of the original peer-prediction model, motivated by opinion elicitation in online settings such as reviews  and reputations where there is no objective ground truth~\cite{jurcafaltings06,jurcafaltings07,jurcafaltings09,witkowski_aaai12,witkowski_ec12}. The Bayesian Truth Serum (BTS) mechanism and its variants~\cite{prelec2004bayesian,witkowski2012robust} use a different technique for eliciting truthful reports of unverifiable information that do not require the assumption of a common prior. Finally, there is also work focused explicitly on incentive-compatible mechanisms for surveying a population to acquire an estimate of some statistic~\cite{lambert2008truthful,jurcafb08,goelrp09,papakonstantinou2011mechanism}. The papers amongst these that are closest to our problem are perhaps~\cite{jurcafaltings06}, which uses automated mechanism design to minimize the budget required by an incentive-compatible information elicitation mechanism, albeit in a setting with equal participation costs, and~\cite{goelrp09} which presents a truthful, weighted, budget-balanced mechanism for collective revelation.

This information elicitation literature, however, studies models that differ fundamentally from our setting in that agents derive utility only from the payment they receive from the mechanism, and not from any use of the reported information itself. Specifically, this literature thus far has focused, to the best of our knowledge, entirely on settings where agents do not derive any (dis)utility from any outputs computed from the reports made to the mechanism, unlike in our setting with privacy costs (where the outputs are the payments made to (other) agents, and the aggregate statistic computed by the  analyst). Also, agents' participation costs in this literature (which in fact has largely assumed equal participation costs for all agents) do not relate to the (private) information the surveyor wishes to elicit from them. These two differences between our setting with a privacy sensitive population and the models in the information elicitation literature significantly alter both the incentives that need to be provided for truthful reporting, as well as the tradeoffs between the total cost incurred by the mechanism and the accuracy of its output.

\section{Model}
We now present a model for the problem of designing a survey to accurately aggregate {\em non-verifiable} information from a population of privacy-sensitive agents, each of whom strategically chooses whether or not to participate, and whether or not to truthfully report her private bit to the survey. \\

{\bf Agents}. There are $n$ agents in the population, each with a private bit $b_i \in \{0,1\}$ of interest to an analyst, and a private cost coefficient $c_i$ that characterizes the agent's disutility from any privacy loss from the survey. The vector of bit-cost pairs $(b,c)= [(b_1, c_1),\ldots,(b_n,c_n)]$ describing the population of $n$ agents is drawn from a known joint distribution $\mathcal{P}$ over $\{0,1\}^n \times \mathbb{R}^n_{\geq 0}$; since the prior $\mathcal{P}$ is a joint distribution over bit-cost {\em pairs}, this model allows for agents to have privacy costs $c_i$ that are correlated with their sensitive private bit $b_i$. (We note that the mechanisms studied in this paper elicit agents' {\em bits}, but do {\em not} ever require reports of the {\em costs}---the privacy costs are only used by agents in their own utility calculations to determine their optimal action trading off the costs and benefits of participation and truthful reporting.)

We make two assumptions about the joint distribution $\mathcal P$. (i) We assume that  $\mathcal P$ is {\em symmetric} over agents, in that all agents are `equal' in terms of what bit-cost pair they may draw. Formally, for every permutation $\sigma$ and every $b \in \{0,1\}^n, c \in \mathbb{R}^n_{\geq 0}$, we have $\Pr_{\mathcal{P}}[(b, c)] = \Pr_{\mathcal{P}}[\sigma((b, c))]$, \ie,  all permutations of a given bit-cost vector describing the population are equally likely.\footnote{Of course, note that this does not mean that all individual bit-cost {\em pairs} are equally likely; it only says that all {\em agents} are equally likely to have a particular pair.} (ii) Second, we assume that an agent's {\em cost} $c_i$ does not give her any information about other agents, beyond what is already conveyed to her by her own {\em bit} $b_i$. Formally, the posterior distribution on other agents' bits $b_{-i}$ and costs $c_{-i}$ that any agent $i$ can compute after observing his own type $(b_i, c_i)$ is conditionally independent of $c_i$ given $b_i$: for every $b_i, b_{-i}, c_i, c_i', c_{-i}$, $\Pr[b_{-i} | b_i, c_i] = \Pr[b_{-i} | b_i, c_i']$. This is a natural condition that can be satisfied in a number of ways---for example, it is satisfied if agents \emph{first} draw a bit $b_i$ from an underlying distribution on bits, and then draw their cost from a distribution parameterized by $b_i$, as in the model considered by \cite{FL12}.

Since an agent's cost, conditional on her bit, does not affect the distribution of remaining agents' bits, and since agents are symmetric, the distribution over the $n-1$ remaining draws of private bits $b_{-i}$ given the draw of $b_i$ does not depend either on the identity $i$ of the agent who made the draw, nor on her cost $c_i$. We will use the following notation for these common posterior distributions.
\begin{definition}\label{d-P0P1}[$\mathcal{P}_{0}, \mathcal{P}_{1}, \mathcal{B}, \mathcal{B}_{0}, \mathcal{B}_{1}, \mathcal{C}, \mathcal{C}_{0}$, $\mathcal{C}_{1}$.]
Consider a prior $\mathcal P$ over the $n$-vector of bit-cost pairs $[(b_i, c_i)]$.  We denote the conditional joint distribution of the vector $(b, c)_{-i} \in (\{0,1\} \times \mathbb{R})^{n-1}$ conditioned on $b_i = 0$ by $\mathcal{P}_{0}$ and that conditioned on $b_i = 1$ by $\mathcal{P}_{1}$.  We denote the marginal distribution of $\mathcal{P}$ (and $\mathcal{P}_{0}$ and $\mathcal{P}_{1}$) on the bits as $\mathcal{B}$  (and  $\mathcal{B}_{0}$ and $\mathcal{B}_{1}$) and the marginal distribution of the costs as $\mathcal{C}$  (and  $\mathcal{C}_{0}$ and $\mathcal{C}_{1}$)
\end{definition}

{\bf Mechanisms.} There is an analyst who would like to acquire each agent's private bit $b_i$ in order to perform some computation on it. The analyst uses a {\em mechanism} $M$ to transform the set of bits $\hat{b}_i$ reported by participating agents into an outcome $o$; in exchange, the mechanism can make \emph{payments} $\pi_i$ to the agents for their reports. Formally, a mechanism is a randomized mapping $M:\{0,1\}^n\rightarrow \mathcal{O} \times \mathbb{R}_{\geq 0}^n$, taking as input the vector of reported bits $\hat{b}_i \in \{0,1,\bot\}$ (where $\bot$ represents the decision to decline to provide one's bits), and produces an output $M(\hat{b}) = (o, \pi)$ where $o \in \mathcal{O}$ is a publicly observed outcome in some abstract outcome space $\mathcal{O}$, and $\pi \in \mathbb{R}_{\geq 0}^n$ is a vector of payments.
We will think of the analyst as interested in computing the fraction of the population that has $b_i = 1$, so that the mechanisms we design will have outcome space $\mathcal{O} = [0, 1] \subseteq \mathbb{R}$.

Unlike in prior literature, we do {\em not} assume that the bits $b_i$ are verifiable by the analyst. In our setting, in addition to declining to provide their bit to the mechanism, agents can {\em lie} about their private bits $b_i$, and will do so if it improves their utilities, which we describe next. \\

{\bf Agent utilities.} An agent's utility is the difference between the payment $\pi_i$ she receives from the analyst, and the cost she incurs from privacy losses from the outcome of the mechanism, modeled as follows.

Each agent in the population is concerned about \emph{privacy}, in the sense of what might be revealed about her sensitive bit $b_i$ by information that becomes available to others as a result of the survey. This includes both the publicly released outcome $o$, as well as the payments $\pi_{-i}$ made to the \emph{other} agents by the mechanism; we assume, however,  that the payment $\pi_i$ made to player $i$ is unobservable to anyone other than player $i$.  We note that this is the strongest privacy model possible, as it is easy to see that if an adversary  can see player $i$'s payment, then no mechanism can be both finitely differentially private and truthful in our setting.
We measure privacy loss in our setting using a variant of \emph{differential privacy} \cite{DMNS06} called \emph{joint-differential privacy}, introduced by \cite{KPRU14}:

\begin{definition}[Joint Differential Privacy] Let $M(b)_{-i} = (o, \pi_{-i})$ denote the portion of the mechanism's output that is observable to outside observers and agents $j \neq i$. A mechanism $M:\{0,1\}^n\rightarrow \mathcal{O} \times \mathbb{R}_{\geq 0}^n$ is $\epsilon$-jointly differentially private if for every vector of bits $\hat{b} \in \{0,1\}^n$ for every player $i$, for each bit $\hat{b}_i' \in \{0,1\}$, and for every observable set of outcomes $S \subseteq \mathcal{O} \times  \mathbb{R}_{\geq 0}^{n-1}$:
$$\Pr[M(\hat{b})_{-i} \in S] \leq \exp(\epsilon)\Pr[M(\hat{b}_i',\hat{b}_{-i})_{-i} \in S.]$$
\end{definition}
Note that the standard notion of differential privacy would require that a unilateral deviation by a single player $i$ would result in only a small change in the distribution over outcomes $o$ as well as payment vectors $\pi$, \emph{including} player $i$'s own payment $\pi_i$ -- meaning that each player would be paid roughly the same amount, no matter what bit they report! Such a mechanism clearly cannot incentivize truthful reporting of one's bits. Joint differential privacy relaxes this constraint, and insists only that the joint distribution over the outcome $o$ and the payments $\pi_{-i}$ made to all players $j \neq i$ be insensitive to the report of player $i$. Crucially, $\pi_i$ can depend on player $i$'s report in an arbitrary way.

Following Nissim, Orlandi, and Tennenholtz \cite{NOS12}, we model agents as having costs for privacy $v_i(M, o, \pi_{-i})$ that can be arbitrary functions of the mechanism and the output.  However, we assume that if the mechanism $M$ is jointly differentially private, the privacy cost to an agent for all possible outcomes of the mechanism is {\em upper-bounded} by a linear function of the level of differential privacy: that is, if $M$ is $\epsilon$-jointly differentially private, $v_i(M, o, \pi_{-i}) \leq \epsilon c_i$ for all outcomes $o$, where $c_i$ is agent $i$'s personal cost coefficient. When it is clear from context, we will simply write $v(\epsilon, o, \pi_{-i})$ to emphasize that the upper bound depends on the mechanism $M$ only via the level of joint differential privacy $\epsilon$  that $M$ provides, and not on any other characteristic of the mechanism.

We assume that players are risk neutral, so that a player's utility for the outcome of a mechanism $M$ that is $\epsilon$-jointly differentially private is
$$u_i(M(\hat{b})) = \E[\pi_i] - \E[v_i(M, o, \pi_{-i})] \geq \E[\pi_i] - \epsilon c_i$$

We are interested in survey-like mechanisms that ask each agent to report their private bit. An agent's action choices in the mechanisms we consider are whether or not to participate, and what bit $\hat b_i$ to report if participating; a \emph{strategy} for an agent $i$ is a function $\sigma_i$ mapping agent type $(b_i, c_i)$ to an action in $\hat{b}_i \in \{0, 1, \bot\}$ (recall that $\bot$ represents non-participation). Since we are in an incomplete information setting, with a known common prior from which agents' types are drawn, we use the solution concept of a Bayes-Nash equilibrium.

\begin{definition}
A set of strategies $\sigma_1,\ldots,\sigma_n$ forms a \emph{Bayes-Nash equilibrium} if for every player $i$, for every realized bit $b_i$ and every alternative strategy $\sigma_i'$:
$$\E_{(b, c) \sim \mathcal{P}}[u_i(M(\sigma(b, c))) | b_i] \geq \E_{(b, c) \sim \mathcal{P}}[u_i(M(\sigma_i'(b_i,c_i),\sigma_{-i}(b_{-i}, c_{-i}))) | b_i].$$
\end{definition}

Note that when we take the expectation over other player's types $b_{-i}, c_{-i}$, we condition only on $b_i$, and not on $c_i$: this is because of our assumption that the posterior distribution is conditionally independent of $c_i$ given $b_i$.


{\bf Accuracy.} The analyst is interested in computing the fraction of the population that has $b_i = 1$: we use $\hat{p} = \frac{1}{n}\sum_{i=1}^n b_i$ to denote this quantity. Since agents in the population each strategically decide whether to participate and what to report as their bits, the estimate actually computed by the analyst may be different from $\hat p$.
The analyst's goal is to design a mechanism such that its outcome closely approximates the true fraction of agents with $b_i = 1$ in the population, measured formally via its accuracy.
\begin{definition}[Accuracy.]
We say that an output estimate $\tilde{p}$ is $\alpha$-accurate with respect to the population if $|\hat{p} - \tilde{p}| \leq \alpha$.
\end{definition}

We wish to design mechanisms that are $\alpha$-accurate with high probability {\em in equilibrium}---\ie, such that the estimate $\tilde{p}$ computed by the mechanism, based on the {\em reported} bits, from the agents who choose to participate, is within an additive error $\alpha$ of the true population statistic. Note that a (joint-)differentially private mechanism $M$ cannot simply return $\tilde{p}$ as the fraction of reports $\hat b_i$ that are equal to $1$, since it must necessarily perturb this fraction to guarantee privacy.


\section{Private peer-prediction}
We now address the problem of conducting sensitive surveys in settings without verifiability: how can an analyst design payments to ensure an accurate, representative outcome when she cannot verify whether privacy-sensitive agents honestly report their private information? To do this, we build on the {\em peer prediction} mechanism~\cite{MRZ05}, designed to elicit information from strategic agents when their reports are unverifiable. The peer prediction  mechanism~\cite{MRZ05}, however, is not differentially private, nor does it consider agents who have privacy costs or preferences over outcomes. But modifying the peer-prediction mechanism to be differentially private also modifies the mechanism's incentive properties, since an agent's incentives for truthtelling now depend not only on whether a random reference agent will truthfully report her bit or not, but also on what fraction of the {\em entire} population of (strategic) potential participants enters the survey and truthfully reports their private information. In this section, we present a differentially private peer-prediction mechanism, and analyze the incentives of agents with privacy costs in this mechanism.

\subsection{Peer Prediction Preliminaries}
We begin by introducing some necessary background on scoring rules, which are designed for information elicitation in settings with {\em verifiable} outcomes.

Consider a forecasting setting, where the objective is to gather and aggregate the opinions of a set of experts about some (observable) future event. There is a large literature on the design of mechanisms---{\em scoring rules}---that reward an expert based on her reported prediction and the observed outcome of the event so as to incentivize the expert to truthfully reveal her true prediction about the event.  Specifically, suppose there is an single expert who has some private belief about the probability of a random binary event. A {\em strictly proper scoring rule} provides strict incentives to the expert to truthfully report her belief, \ie, the payment made by such a rule is such that the expert uniquely maximizes her expected payment by reporting her true belief.

In this paper, we will illustrate our approach to private peer-prediction via a particular strictly proper scoring rule, the well-known {\em Brier} scoring rule~\cite{Brier}, which makes payments for predicting a binary event as follows.
Let $I$ be the indicator random variable for the binary event to be predicted, and let $q$ be a prediction of the probability of the event occurring. The payment for prediction $q$ depends on the realized outcome $I$ as
\[
BasicBrier(I, q) = 2I\cdot q + 2(1-I)\cdot (1-q) - q^2 - (1-q)^2,
\]
and it is easy to verify that an expert who believes that $I$ occurs with probability $p$ will maximize her expected payment (with respect to her belief about the probability of $I$) by reporting $q = p$.

The following extension of the basic Brier scoring rule is central to our mechanism.

\begin{definition}[$B(p,q)$, $B_{c, d, \rho}(p, q)$]
For any $p$ and $q$, we define the payment function $B(p,q)$ as follows:
\[
B(p,q) = 1 - 2(p - 2p\cdot q + q^2).
\]
We also define a rescaling $B_{c, d, \rho}(p, q)$ of the payment function $B(p,q)$ as follows where $\rho > 0$:
\[
B_{c, d, \rho}(p, q) = \rho(B(p-c, q-c)-d).
\]
\end{definition}
The definition of $B(p,q)$ has a simple interpretation in terms of the basic Brier scoring rule---$B(p,q)$ is exactly the {\em expected} payment under the basic Brier scoring rule from reporting a guess $q$ for the probability of an event, when the expert believes that event $I$ occurs with probability $p$:
\begin{align*}
\E_{I \sim p}[BasicBrier(I,q)]&=2 p\cdot q + 2(1-p)\cdot (1-q) - q^2 - (1-q)^2\\
&= 1 - 2(p - 2p\cdot q + q^2)\\
&= B(p,q).
\end{align*}
The need for the modified payment function $B_{c, d, \rho}(p, q)$ will become clear when we move to discussing private peer prediction mechanisms. The values of $\rho$ and $d$ simply apply a linear shift to the payoff $B(p,q)$, while the effect of $c$ is slightly more subtle---we will use it when defining payments in the peer-prediction mechanism to symmetrize the scoring rule across observations $b_i = 0$ and $b_i = 1$.

The following easy facts about $B_{c, d, \rho}(p,q)$ will be useful in our equilibrium analysis.
\begin{proposition}
\label{p-Bpq}
The payment scheme $B_{c, d, \rho}(p,q)$ 
has the following properties. 
\BNUM
\item For any $p$, $B_{c, d, \rho}(p,q)$ is (uniquely) maximized by reporting $q = p$.
\item $B_{c, d, \rho}(p,q)$ is Lipschitz-continuous in $p$: For any $p$, $p'$,
\BEQ
\label{e-Bpq2}
|B_{c, d, \rho}(p,q) - B_{c, d, \rho}(p',q)| \leq \lambda|p - p'|,
\EEQ
where $\lambda = |\rho(2-4(q-c))|$.
\item Suppose the value of $p$ is randomly drawn from a distribution with mean $\E[p]$. Then an agent's expected payment from reporting $q$ is
\BEQ
\label{e-Bpq3}
\E_p[B_{c, d, \rho}(p,q)] = B_{c, d, \rho}(\E[p],q),
\EEQ
and this payment is maximized by reporting $q = \E[p]$.
\item For any $p,q$,
\BEQ
\label{e-Bpq4}
B_{c, d, \rho}(p,p) - B_{c, d, \rho}(p,q) = 2\rho(p-q)^2.
\EEQ
\ENUM
\end{proposition}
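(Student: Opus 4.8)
The plan is to reduce all four properties to a single algebraic normal form for the payment function. Expanding the definition gives $B(p,q) = 1 - 2p + 4pq - 2q^2$, so that after the shift and rescaling,
\[
B_{c, d, \rho}(p,q) = \rho\bigl(1 - 2(p-c) + 4(p-c)(q-c) - 2(q-c)^2 - d\bigr).
\]
The key structural observation is that this expression is \emph{affine} in $p$ and \emph{concave quadratic} in $q$; the shift by $c$ only relocates the arguments, the subtraction of $d$ is a constant, and $\rho > 0$ preserves the direction of optimization. Every claim then follows by tracking these two features. I would set $P = p - c$ and $Q = q - c$ throughout to keep the algebra clean, and recover the statements by substituting back.

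For Property 1, it suffices (since $\rho>0$) to maximize $B(P,Q) = 1 - 2P + 4PQ - 2Q^2$ over $Q$ for fixed $P$. Differentiating gives $\partial_Q B = 4P - 4Q$, with the unique critical point $Q = P$, and $\partial_Q^2 B = -4 < 0$ confirms it is the unique global maximizer; translating back yields $q = p$. For Property 2, I would isolate the $p$-dependent terms of $B_{c, d, \rho}(p,q)$, which are exactly $\rho(p-c)\bigl(-2 + 4(q-c)\bigr)$, so the difference at two values $p,p'$ telescopes to $\rho\bigl(-2+4(q-c)\bigr)(p-p')$; taking absolute values gives the bound with $\lambda = |\rho(2 - 4(q-c))|$, in fact as an equality (so the stated Lipschitz constant is tight).

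Property 3 is immediate from affineness in $p$: since $B_{c, d, \rho}(p,q)$ is a linear function of $p$, linearity of expectation gives $\E_p[B_{c, d, \rho}(p,q)] = B_{c, d, \rho}(\E[p],q)$, and the maximizer claim is then just Property 1 applied with $\E[p]$ in place of $p$. Property 4 is a one-line expansion: with $P=p-c$ and $Q=q-c$ we have $B(P,P) - B(P,Q) = 2P^2 - 4PQ + 2Q^2 = 2(P-Q)^2 = 2(p-q)^2$, so multiplying by $\rho$ gives $2\rho(p-q)^2$.

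There is no substantive obstacle here — every step is elementary algebra or single-variable calculus. The only points demanding care are bookkeeping ones: invoking $\rho > 0$ so that maximizing $B_{c, d, \rho}$ agrees with maximizing $B$ (rather than flipping to a minimization), and tracking the sign inside the absolute value in Property 2 so that the stated constant $\lambda = |\rho(2-4(q-c))|$ emerges correctly. It is also worth remarking explicitly that the constants $c$ and $d$ play no role in Properties 1, 3, and 4 beyond an inconsequential shift, which is why the optimal report $q=p$ is independent of the symmetrization parameter $c$.
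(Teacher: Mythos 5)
Your proof is correct and follows essentially the same route as the paper's: a derivative/strict-concavity argument for the unique maximizer, an exact computation of the difference in $p$ for the Lipschitz bound, linearity in $p$ for the expectation claim, and direct expansion for the last identity. The substitution $P = p - c$, $Q = q - c$ is just a cosmetic cleanup of the same algebra.
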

\begin{proof}
The first statement follows immediately by setting the derivative of $B_{c, d, \rho}(p,q)$  with respect to $q$ to be zero since $B_{c, d, \rho}$ is strictly concave in $q$, the second from noting that $B_{c, d, \rho}(p,q) - B_{c, d, \rho}(p + \epsilon, q) = \rho(2-4(q-c))\epsilon$, and the third from noting that $B(p,q)$ is linear in $p$ so that $\E_p[B(p,q)] = B(\E[p],q)$, and then applying the first statement. The last statement is obtained by direct substitution:
\begin{align*}
B_{c, d, \rho}(p,p) - B_{c, d, \rho}(p,q) = ~& \rho(1-2(p-c) + 4(p-c)^2 - 2(p-c)^2-d) \\
&~-\rho(1 -2(p-c)+4(p-c)(q-c) -2(q-c)^2 -d) \\
=~&  2\rho(p-q)^2.\qed \\
\end{align*}
\end{proof}

\noindent {\bf Peer prediction.} In our setting, there is no publicly observable random event or ``ground truth'' whose outcome is being predicted by an ``expert''. Instead, we only have reports from the agents being surveyed about their private bits, albeit drawn from a common prior distribution. The key idea behind the {\em peer-prediction} mechanism~\cite{MRZ05} designed for such information elicitation problems with {\em unverifiable information} is the following: an agent's report, instead of being rewarded for its ability to predict an observable event, can be rewarded for its success in ``predicting''  the outcome of the random event consisting of another agent's draw of her private bit. Note that the mechanism's only access to this private bit is via the agent's report, which necessitates introducing the notion of {\em equilibrium} truthful reporting: since each agent's payoff, which depends on the ability of her report to predict another agent's private bit, depends on that agent's report, we cannot just ask what a particular agent will report (as with scoring rules); rather, we must ask what reporting strategies of the population of agents constitute an equilibrium.

The vector of private bit-cost pairs $[(b_i, c_i)]$ of the $n$ agents in our population is drawn from a (known) prior distribution $\pd$;
recall from Definition \ref{d-P0P1} that $\mathcal{P}_{0}$ (respectively $\mathcal{P}_{1}$) denotes the conditional distribution of the vector $b_{-i} \in \{0,1\}^{n-1}$ conditioned on $b_i = 0$ (respectively $b_i = 1$). Since these distributions $\mathcal{B}_{0}$  and $\mathcal{B}_{1}$ are known (they can be computed from the known prior $\pd$), an agent $i$'s draw of $b_i$ can be used to compute a posterior probability that a random agent $j\neq i$ draws $b_j = 1$.
The fact that a particular draw of $b_i$ leads to an updated value for $\Pr(b_j = 1)$ means that one can ask agents for their bit $b_i$, and use it to compute their prediction of the probability that $b_j = 1$ using $\pd$, instead of asking them to directly report their prediction of this probability.\footnote{Note that eliciting $b_i$ is not, in general, equivalent to eliciting the probability, since the set of strategies available to an agent facing a scoring rule of the form $B(p,q)$ shrinks when reporting a single bit---while in general an agent could choose to report any $q \in [0,1]$, asking an agent to report a bit $\hbi$ restricts her to only two possible values of $q$, $p_1$ or $p_0$, corresponding to the reports of $\hbi = b_i$ and $\hbi = 1-b_i$. } Thus, if agent $i$ reports $b_i = 0$, then the mechanism can translate that into the prediction $p_0 = \Pr[b_j = 1| b_i=0]$ and if agent $i$ reports $b_i = 1$, then the mechanism can translate that into the prediction $p_1 = \Pr[b_j = 1| b_i=1]$.

A peer prediction mechanism based on the Brier scoring rule would then pay agent $i$ $BasicBrier(\hat{b}_j,p_{\hat{b}_i})$ where $\hat{b}_i$, and $\hat{b}_j$ are the reports of agents $i$ and $j$ respectively. The first claim in Proposition \ref{p-Bpq}, applied with $c = d = 0$ and $\rho = 1$, immediately yields that truthful reporting is a Nash equilibrium with these payments: consider an agent $i$ with reference agent $j$. If $j$ reports her bit truthfully, \ie, $\hat{b}_j = b_j$, then $i$'s belief about the probability of $j$'s report being $1$, conditional on her draw, is exactly $\Pr[b_j = 1| b_i]$. Proposition \ref{p-Bpq}(i) says that $i$ maximizes her payoff in the payment rule $BasicBrier$ by reporting her true belief about this probability, namely $p_{b_i}$, which corresponds precisely to reporting her true bit $b_i$ to the mechanism.

For a peer-prediction mechanism to guarantee differential privacy to agents, however, the mechanism must introduce a  perturbation into the reports that are used to compute other agents' payments (recall that our notion of privacy assumes that all payments, except those made to agent $i$ herself, are publicly visible, and therefore may reveal information about agent $i$. But such a perturbation will change the quantity that agent $i$'s prediction $p_{\hat{b}_i}$ is being compared against, and correspondingly her payoffs, and possibly her incentives.  We will therefore need to construct an analog of Proposition \ref{p-Bpq}(i) that will allow us to declare that an agent will do better by truthfully reporting her bit $b_i$ than by lying, assuming other agents also report truthfully, even when the bit he is trying to predict is not actually drawn according to the distribution $\Pr[b_j = 1| b_i=0]$ but only a distribution close (enough) to it.

\vspace{5pt}

The following proposition creates a peer-prediction mechanism that is resilient to noise, which forms the basis for designing a differentially private peer-prediction mechanism with truthful reporting in equilibrium. The first part of the  proposition provides an {\em upper} bound on an agent's payoff from {\em lying} about her bit, assuming all other agents report truthfully, while the second provides a {\em lower} bound on the payoff from {\em truthful} reporting, as a function of the magnitude of the noise. Together, these two statements will allow us to lower-bound the increase in payoff from truthful reporting over lying (assuming other agents report truthfully) as a function of how much noise is added to the computation by a differentially private mechanism; we will then compare these against the improvement in privacy cost from lying to investigate when truthful reporting constitutes an equilibrium for privacy-sensitive agents, in Theorem \ref{t-mptp}.

\begin{proposition}~\label{prop:pp-values}
Consider a prior $\pd$, and let $p_0, p_1 \in [0, 1]$ be the posterior values $\Pr[b_j = 1| b_i]$ corresponding to this prior for $b_i = 0$ and $b_i = 1$ respectively. Let $\alpha, \beta \in \mathbb{R}$ be values such that $\beta > 0$ and $\alpha < \frac{|\po-\pz|}{2}$, and set the scale-and-shift parameters $c,d,\rho$ in the modified Brier scoring rule $B_{c, d, \rho}$ to be
\begin{align*}
c &= (p_0 + p_1 - 1)/2,\\
d &= \frac{1}{2} - \frac{3}{2}(\po-\pz)^2 + 2\alpha|\pz - \po|,\\
\rho &= \frac{\beta}{2(\po-\pz)^2-4\alpha|\pz - \po|}.
\end{align*}
Then, (i) $\rho > 0$, and (ii) the following inequalities hold for each of $b \in \{0, 1\}$:
\BNUM
\item For any $p'$ with $|p' - p_b| < \alpha, \:\:\:   B_{c, d, \rho}(p', p_{1-b}) \leq 0$.  \label{eq:lyingwithtolerance}
\item For any $p'$ with $|p' - p_b| < \alpha, \:\:\: \beta \leq B_{c, d, \rho}(p', p_b)$. \label{eq:truthwithtolerance}
\item For any $\alpha' > 0$, $p'$ with $|p' - p_b| < \alpha', \:\:\: \beta + 2 \rho (\alpha + \alpha')|\pz - \po| \geq B_{c, d, \rho}(p', p_b)$. \label{eq:paymentbound}
\ENUM

\end{proposition}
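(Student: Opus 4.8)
The plan is to reduce all three inequalities to a single Lipschitz perturbation argument anchored at the exact prediction $p' = p_b$, after first checking that the common denominator defining $\rho$ is positive. For part (i), I would factor $2(p_1-p_0)^2 - 4\alpha|p_0-p_1| = 2|p_1-p_0|\,(|p_1-p_0| - 2\alpha)$, which is strictly positive because the hypothesis $\alpha < |p_1-p_0|/2$ (together with $\alpha \geq 0$) forces both $|p_1-p_0|>0$ and $|p_1-p_0| - 2\alpha > 0$; since $\beta > 0$ this gives $\rho > 0$. Throughout I would abbreviate $\Delta := p_1 - p_0$ and exploit that the shift $c = (p_0+p_1-1)/2$ is chosen precisely so that $p_0 - c = (1-\Delta)/2$ and $p_1 - c = (1+\Delta)/2$ are symmetric about $1/2$; this symmetry is what lets me handle $b=0$ and $b=1$ in one stroke.

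The first concrete step is to evaluate the ``anchor'' values of the unscaled rule $B(x,y) = 1 - 2x + 4xy - 2y^2$ at the two honest predictions. Direct substitution gives, for both $b \in \{0,1\}$, the diagonal value $B(p_b - c, p_b - c) = \tfrac12 + \tfrac12\Delta^2$ and the off-diagonal (``lying'') value $B(p_b - c, p_{1-b} - c) = \tfrac12 - \tfrac32\Delta^2$; the point is that both are independent of $b$, which is the symmetrization promised by $c$. I would also record the Lipschitz constant: by Proposition \ref{p-Bpq}(ii) the map $p \mapsto B_{c, d, \rho}(p,q)$ is $\lambda$-Lipschitz with $\lambda = |\rho(2 - 4(q-c))|$, and for $q \in \{p_0, p_1\}$ one computes $|2 - 4(q-c)| = 2|\Delta|$, so $\lambda = 2\rho|p_1-p_0|$ uniformly.

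Substituting $d$ then pins down the scaled anchors: $B_{c, d, \rho}(p_b, p_{1-b}) = \rho(\tfrac12 - \tfrac32\Delta^2 - d) = -2\rho\alpha|\Delta|$ and $B_{c, d, \rho}(p_b, p_b) = \rho(\tfrac12 + \tfrac12\Delta^2 - d) = \rho(2\Delta^2 - 2\alpha|\Delta|)$. Each claim now follows by moving $p'$ off the anchor via the bound $\lambda|p'-p_b|$. For \eqref{eq:lyingwithtolerance}, $|p'-p_b|<\alpha$ yields $B_{c, d, \rho}(p',p_{1-b}) < -2\rho\alpha|\Delta| + 2\rho|\Delta|\alpha = 0$. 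For \eqref{eq:truthwithtolerance}, $B_{c, d, \rho}(p',p_b) > \rho(2\Delta^2 - 2\alpha|\Delta|) - 2\rho|\Delta|\alpha = \rho(2\Delta^2 - 4\alpha|\Delta|)$, where the denominator of $\rho$ cancels exactly to leave $\beta$. For \eqref{eq:paymentbound}, using $\alpha'$ in the upward direction gives $B_{c, d, \rho}(p',p_b) < \rho(2\Delta^2 - 2\alpha|\Delta|) + 2\rho|\Delta|\alpha' = \beta + 2\rho|\Delta|(\alpha+\alpha')$, which is the stated bound since $|\Delta| = |p_0-p_1|$.

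The only real obstacle is bookkeeping rather than ideas: one must verify that $c$ genuinely equalizes the $b=0$ and $b=1$ computations (otherwise the proposition would require separate constants per bit), and one must notice that $d$ and $\rho$ are reverse-engineered so that the $\tfrac12$ terms cancel against the lying value while the leftover gap $2\Delta^2 - 4\alpha|\Delta|$ is exactly the denominator of $\rho$. Once those two algebraic coincidences are established, all three inequalities are immediate Lipschitz estimates, and the strict hypotheses ($|p'-p_b|<\alpha$ and $\beta>0$) ensure each bound holds with room to spare.
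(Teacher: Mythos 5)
Your proof is correct and follows essentially the same route as the paper's: use the shift $c$ to symmetrize so that the anchor values $B(p_b-c,p_b-c)=\tfrac12+\tfrac12(p_1-p_0)^2$ and $B(p_b-c,p_{1-b}-c)=\tfrac12-\tfrac32(p_1-p_0)^2$ are independent of $b$, compute the scaled anchors $\beta+2\rho\alpha|p_1-p_0|$ and $-2\rho\alpha|p_1-p_0|$, and then obtain all three inequalities from the Lipschitz bound of Proposition~\ref{p-Bpq} with $\lambda=2\rho|p_1-p_0|$. The only cosmetic difference is that you justify $\rho>0$ by explicitly factoring the denominator as $2|p_1-p_0|(|p_1-p_0|-2\alpha)$, where the paper simply asserts it.
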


\begin{proof}
Note that $\rho > 0$ follows from $\beta > 0$ and $\alpha < \frac{|\po-\pz|}{2}$.
To prove the next set of inequalities, first note that $B(1/2 + p, 1/2 + q) = B(1/2 - p, 1/2 - q)$; this can be confirmed by algebraic manipulation.  Thus we have
\begin{align*}
B(p_0 -c, p_0 -c) &= B\left(\frac{1}{2} + \frac{p_0 - p_1}{2},\frac{1}{2} + \frac{p_0 - p_1}{2}\right)  \\
&=  B\left(\frac{1}{2} + \frac{p_1 - p_0}{2},\frac{1}{2} + \frac{p_1 - p_0}{2}\right) \\
&= B(p_1 -c, p_1 -c),
\end{align*}
and similarly
\begin{align*}
B(p_0 -c, p_1 -c) &= B\left(\frac{1}{2} + \frac{p_0 - p_1}{2},\frac{1}{2} + \frac{p_1 - p_0}{2}\right)  \\
&=  B\left(\frac{1}{2} + \frac{p_1 - p_0}{2},\frac{1}{2} + \frac{p_0 - p_1}{2}\right) \\
&= B(p_1 -c, p_0 -c).
\end{align*}

Next, note that for $b \in \{0, 1\}$,
\BEQ
 B(p_b - c, p_{b} - c ) = B\left(\frac{1}{2} + \frac{p_b - p_{b-1}}{2},\frac{1}{2} + \frac{p_b - p_{b-1}}{2}\right)  = \frac{1}{2} + \frac{1}{2}(\po-\pz)^2, \label{eq:briertruth}
\EEQ
where the last equality follows from the definition of $B(p,q)$. Similarly, for $b \in \{0, 1\}$,
\BEQ
 \label{eq:brierlying}
 B(p_b - c, p_{1-b} - c ) = B\left(\frac{1}{2} + \frac{p_b - p_{b-1}}{2},\frac{1}{2} + \frac{p_{1-b} - p_{b}}{2}\right)
=  \frac{1}{2} - \frac{3}{2}(\po-\pz)^2.
\EEQ

Next, we compute the expected payoff $B_{c, d, \rho}(p_b, p_b)$ to an agent for truthfully her private bit, {\em assuming} all other agents report truthfully, for both possible values of the private bit $b \in \{0, 1\}$:
\begin{align}
B_{c, d, \rho}(p_b, p_b) & = \rho(B(p_b - c, p_{b} - c ) - d) \nonumber\\
                         & =  \rho\left(\frac{1}{2} + \frac{1}{2}(\po-\pz)^2 - \left(\frac{1}{2}  - \frac{3}{2}(\po-\pz)^2 + 2\alpha|\pz - \po|\right)\right)\nonumber\\
                         & =  \rho\left(2(\po-\pz)^2  - 2\alpha|\pz - \po|\right) \nonumber\\
                         & =  \rho(2\alpha|\pz - \po|) + \rho\left(2(\po-\pz)^2 - 4\alpha|\pz - \po|\right)\nonumber\\
                         & =  \rho(2\alpha|\pz - \po|) + \beta. \label{eq:truth}
\end{align}

Similarly, the expected payoff $B_{c, d, \rho}(p_b, p_{1-b})$ to an agent for lying (\ie, reporting $1-b$ when her true bit is $b \in \{0, 1\}$) {\em when} all other agents report truthfully, is:
\begin{align}
B_{c, d, \rho}(p_b, p_{1-b}) & = \rho\left(B(p_b - c, p_{1-b} - c ) - d\right)\nonumber\\
                         & =  \rho\left(\frac{1}{2} - \frac{3}{2}(\po-\pz)^2 - \left(\frac{1}{2}  - \frac{3}{2}(\po-\pz)^2 + 2\alpha|\pz - \po|\right)\right)\nonumber\\
                         & =  - \rho(2\alpha|\pz - \po|). \label{eq:lying}
\end{align}
To obtain the first statement (\ref{eq:lyingwithtolerance}), which upper-bounds the expected payoff from lying when an agent's report is compared against a noisy perturbation of other agents' true reports, we apply the Lipschitz condition of Proposition~\ref{p-Bpq} Part~\ref{e-Bpq2} with $\lambda = |\rho(2-4(p_b - c))| = 2 \rho |p_1-p_0|$ and $|q - p_b| \leq \alpha$ to Equation~\ref{eq:lying}. Similarly, the second statement (\ref{eq:truthwithtolerance}), which lower-bounds the expected payoff from truthtelling under noise, follows from applying the same Lipschitz condition to Equation~\ref{eq:truth}. Finally, the third statement (\ref{eq:paymentbound}) follows from applying Proposition~\ref{p-Bpq} part~\ref{e-Bpq2} with $\lambda = |\rho(2-4(p_b - c))|= 2 \rho |p_1-p_0|$ and $|q - p_b| \leq \alpha'$ to Equation~\ref{eq:truth}.
 \end{proof}


\subsection{A differentially-private peer prediction mechanism}
\label{s-diffppp}


We now address the question of designing incentives for a survey where privacy-sensitive individuals might incur disutility from the use of their data in this computation.


For a distribution $\mathcal{P}$ over types $(b, c)$, recall that we have defined posterior distributions $P_0$ and $P_1$, conditioned on a draw of $b_i = 0$ and $b_i = 1$ respectively.

Our mechanism follows. Conceptually, the mechanism asks each agent to report their private bit, which they have the option of misrepresenting. From their report, the mechanism computes the posterior belief that is consistent with their report (assuming truthful reporting). It then computes the average value of all agents' reports (treating players who have opted not to participate identically as if they reported $\hbi = 0$), and perturbs this value so as to guarantee differential privacy. Finally, it pays each agent using our modified Brier scoring rule, as if they are using the computed posterior distribution to bet on the perturbed average bit value. The payments are carefully scaled to implement a Bayes Nash equilibrium in which almost all players choose to report their bit truthfully. \\

\noindent{\bf Mechanism $\Mpna$.} Consider the following mechanism $\Mpna$ for conducting a sensitive survey with prior $\mathcal{P}$, parameterized by a participation goal $1-\alpha$, surplus payment $\beta$, and noise parameter $\eps$ with $n$ agents.
\BNUM
\item Each participating agent $i$ submits a report $\hbi$ of her private bit.
\item Set $\hbi = 0$ for each non-participating agent $i$.
\item Compute $\hat{b} = \sum_{i=1}^n \hbi$.
\item Perturb $\hat{b}$ as follows: $\bar{b} = \hat{b} + \mathrm{Lap}\left(\frac{1}{\epsilon}\right)$.
\item \[
\tp = \argmin_{x \in [0,1]}\left|x - \frac{\bar{b}}{n}\right|
\]
\[
\tp_{-i} = \argmin_{x \in [0,1]}\left|x - \frac{\bar{b} - \hbi}{n-1}\right|
\]
\item Compute $p_{\hbi}$ as follows: $ \pz = \E_{\mathcal{P}}[\tp_{-i}| b_i=0]$; and $\po = \E_{\mathcal{P}}[\tp_{-i}| b_i=1]$. Here, we take the expectation over the draws of other agents' bits, assuming they are using the strategy of reporting their bits truthfully.
\item Next compute $c$, $d$, and $\rho$ as functions of $p_0$, $p_1$, $\alpha$ and $\beta$ as in Proposition~\ref{prop:pp-values}.
\item Pay each participating agent $i$,  $\pi_i$, based on her report $\hbi$ and $\tp_{-i}$:
\[
\pi_i =  B_{c, d, \rho}(\tp_{-i}, \phbi) = \rho(1 - 2((\tp_{-i} - c) - 2(\tp_{-i}-c)\cdot (\phbi-c) + (\phbi-c)^2) -d).
\]
(Non-participating agents receive no payment.)
\item  Output estimate $\tp$.
\ENUM 

\subsection{Analyzing $\Mpna$}
In this section, we prove that $\Mpna$ is an $\eps$-joint  differentially private mechanism supporting truthful reporting by a fraction $1-\alpha$ of agents in equilibrium, and computes an accurate outcome.

We first show in Theorem \ref{T-diffp} that $\Mpna$ is $\epsilon$-jointly differentially private, which follows from routine arguments and the following lemma.
\begin{lemma}[``Billboard Lemma'' \cite{HHRRW14,RR14}]
\label{lem:billboard}
Fix any mechanism $\mathcal{M}:\mathcal{T}^n\rightarrow \mathcal{O}$, for arbitrary sets $\mathcal{T}$ and $\mathcal{O}$. Fix any function $f:\mathcal{T}\rightarrow \mathcal{O'}$. If $\mathcal{M}$ is $\epsilon$-differentially private, then the mechanism $\mathcal{M}':\mathcal{T}^n\rightarrow \mathcal{O} \times \mathcal{O'}^n$ that computes $o = \mathcal{M}(t)$ and outputs $\mathcal{M}'(t) = (o, (f(t_1, o),\ldots,f(t_n,o)))$ is $\epsilon$-jointly differentially private.
\end{lemma}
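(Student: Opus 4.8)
The plan is to derive the lemma from a single structural observation combined with closure of differential privacy under post-processing. The crucial feature to exploit is that the portion of $\mathcal{M}'$'s output visible to the coalition of all parties other than agent $i$ is a function of the billboard $o$ and of the \emph{other} agents' types only, with agent $i$'s type $t_i$ entering solely through $o$.

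First I would unpack the definition of joint differential privacy as applied to $\mathcal{M}'$. Fix an index $i$, fix two input vectors $t, t'$ that agree in every coordinate except the $i$-th, and fix a measurable set $S \subseteq \mathcal{O} \times \mathcal{O}'^{n-1}$ of observable outcomes. Writing $o = \mathcal{M}(t)$, the view available to everyone but $i$ is $\mathcal{M}'(t)_{-i} = (o, (f(t_j, o))_{j \neq i})$, where (reading the statement with $f$ having domain $\mathcal{T}\times\mathcal{O}$) each $f(t_j,o)$ is a deterministic function of $t_j$ and $o$. For a fixed tuple $s$ of the non-$i$ types I would define the deterministic map $g_s(o) = (o, (f(s_j, o))_{j \neq i})$, so that $\mathcal{M}'(t)_{-i} = g_{t_{-i}}(\mathcal{M}(t))$ and $\mathcal{M}'(t')_{-i} = g_{t'_{-i}}(\mathcal{M}(t'))$.

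The key step is the observation that $g_{t_{-i}} = g_{t'_{-i}}$: since $t$ and $t'$ differ only in coordinate $i$, they share the same $t_{-i}$, and the map $g$ references only these other coordinates together with the billboard. Hence in both the $t$-world and the $t'$-world the observable view is the image of the billboard under one and the same fixed map $g := g_{t_{-i}}$. Because $\mathcal{M}$ is $\epsilon$-differentially private and $t, t'$ are neighboring, and because applying a fixed, data-independent map to the output of an $\epsilon$-differentially private mechanism again yields an $\epsilon$-differentially private mechanism (closure under post-processing), I would conclude $\Pr[g(\mathcal{M}(t)) \in S] \leq \exp(\epsilon)\,\Pr[g(\mathcal{M}(t')) \in S]$, i.e. $\Pr[\mathcal{M}'(t)_{-i} \in S] \leq \exp(\epsilon)\,\Pr[\mathcal{M}'(t')_{-i} \in S]$, which is precisely the joint differential privacy guarantee.

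The only load-bearing point, and the reason the conclusion is \emph{joint} rather than standard differential privacy, is that the post-processing map $g$ is independent of $t_i$; this holds exactly because the $i$-th component $f(t_i, o)$ is withheld from the observing coalition. Had $f(t_i,o)$ been included in the observable view, changing $t_i$ would change the map applied to $o$, and closure under post-processing would no longer apply directly. I expect no measure-theoretic obstacle beyond recording that post-processing closure holds for arbitrary measurable maps, which covers the deterministic $g_{t_{-i}}$ used here; the argument for the target mechanism $\Mpna$ then follows by instantiating $\mathcal{M}$ as the Laplace-perturbed report sum $\bar b$ (sensitivity one, hence $\epsilon$-differentially private) and $f$ as the Brier payment $B_{c,d,\rho}(\tp_{-j}, p_{\hat b_j})$, which indeed depends on the other agents only through $\bar b$.
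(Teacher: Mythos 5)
Your proof is correct, and it is exactly the standard argument for this lemma: the paper itself does not prove it (it is imported by citation from \cite{HHRRW14,RR14}), and the proof in those references is precisely your observation that the view $(o,(f(t_j,o))_{j\neq i})$ of the coalition excluding $i$ is a fixed, $t_i$-independent post-processing $g_{t_{-i}}$ of the $\epsilon$-differentially private output $o$, so closure under post-processing gives the joint guarantee. You also correctly handle the small typo in the statement (reading $f$ as having domain $\mathcal{T}\times\mathcal{O}$) and correctly identify how the lemma is instantiated in the proof of Theorem~\ref{T-diffp}.
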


To prove joint differential privacy, we first observe that the output $\tp$ of the mechanism observed by anyone external to the mechanism is $\epsilon$-differentially private, and together with the output, the vector of payments $\pi$ is $\epsilon$-jointly differentially private.

\begin{theorem}
\label{T-diffp}
$\Mpna:\{0,1,\bot\}^n\rightarrow \mathbb{R}_{\geq 0} \times \mathbb{R}_{\geq 0}^n$ is $\epsilon$-jointly differentially private.
\end{theorem}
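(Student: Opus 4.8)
The plan is to invoke the Billboard Lemma (Lemma~\ref{lem:billboard}). Its hypothesis asks for a single $\epsilon$-differentially private ``public'' statistic $o$ such that every agent's payment can be written as a function $f(\hbi, o)$ of that agent's \emph{own} report together with $o$ alone. The natural candidate for $o$ is the noisy aggregate $\bar{b} = \hb + \mathrm{Lap}(1/\eps)$ computed in Step~4. First I would argue that $\bar b$ is $\epsilon$-differentially private as a function of the report vector: the query $\hb = \sum_i \hbi$ has $\ell_1$-sensitivity $1$, since each $\hbi \in \{0,1\}$ (non-participation is coerced to $0$ in Step~2), so a unilateral change of any single agent's report alters $\hb$ by at most $1$. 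Hence Step~4 releases $\bar b$ with $\epsilon$-differential privacy by the standard Laplace-mechanism guarantee. The public estimate $\tp$ output in Step~9 is a deterministic post-processing of $\bar b$ (clipping $\bar b/n$ to $[0,1]$), and so inherits $\epsilon$-differential privacy.

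Next I would verify the crucial structural fact that each payment $\pi_i$ depends on the reports of the other agents only through $\bar b$. The scoring-rule parameters $\pz, \po$, and hence $c, d, \rho$, are computed in Steps~6--7 as expectations under the prior $\mathcal{P}$ assuming truthful play (as in Proposition~\ref{prop:pp-values}); they are therefore fixed constants determined by $(\mathcal{P}, n, \alpha, \beta, \eps)$ and carry no dependence on the realized reports. Given these constants, the leave-one-out statistic $\tp_{-i}$ of Step~5 is obtained by clipping $(\bar b - \hbi)/(n-1)$ to $[0,1]$, i.e.\ it is a function of $\bar b$ and $\hbi$ only, and $\phbi$ equals $\pz$ or $\po$ according to $\hbi$ alone. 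Consequently
$$\pi_i = B_{c, d, \rho}(\tp_{-i}, \phbi) = f(\hbi, \bar b)$$
for an appropriate fixed function $f$, exactly as the Billboard Lemma requires.

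Applying Lemma~\ref{lem:billboard} with public output $\bar b$ and per-agent function $f$ then yields that $(\bar b, (\pi_1, \ldots, \pi_n))$ is $\epsilon$-jointly differentially private. Finally, since the actual mechanism reveals $\tp$ rather than $\bar b$, and $\tp$ is a deterministic function of $\bar b$, the observable output $(\tp, \pi_{-i})$ is a deterministic post-processing of $(\bar b, \pi_{-i})$; joint differential privacy is closed under such post-processing, which establishes the claim.

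I expect the main point requiring care to be the structural verification in the second paragraph: confirming that none of $\pz, \po, c, d, \rho$ smuggles in a dependence on the realized reports (they are prior-expectations, hence constants fixed before the reports are seen), and that the leave-one-out statistic $\tp_{-i}$ is genuinely a function of the single differentially private billboard $\bar b$ together with $\hbi$, rather than of the individual reports $\hat{b}_j$ for $j \neq i$ separately. Once this is in place, the remaining ingredients---sensitivity-$1$ Laplace privacy and closure of joint differential privacy under post-processing---are routine.
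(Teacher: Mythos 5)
Your proof is correct and follows essentially the same route as the paper's: sensitivity-$1$ Laplace release of $\bar{b}$, post-processing to obtain $\tp$, and the Billboard Lemma applied to payments of the form $f(\hbi, \bar{b})$. Your second paragraph merely spells out in more detail what the paper compresses into its final sentence, namely that $p_0, p_1, c, d, \rho$ are prior-determined constants so each $\pi_i$ depends on others' reports only through $\bar{b}$.
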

\begin{proof}
Privacy follows from the fact that $\hat{b}$ has sensitivity $1$, and $\bar{b}$ is computed using the Laplace mechanism of \cite{DMNS06} with scale $1/\epsilon$. From \cite{DMNS06}, we know that the computation of $\bar{b}$ is $\epsilon$-differentially private. This computation is followed by data independent post-processing to compute $\tp$, which cannot degrade the differential privacy guarantee. Finally, payments $\pi_i$ are computed as a function $f(\hat{b}_i, \bar{b}) = B_{c, d, \rho}(\tp_{-i}, p_{\hat{b}_i})$, as allowed by Lemma \ref{lem:billboard}.
\end{proof}

We next prove Theorem \ref{t-mptp}, which addresses the question of accurate estimation of the population statistic in equilibrium by our mechanism $\Mpna$: we will show that there is a {\em threshold strategy equilibrium} in $\Mpna$---where all agents with cost $c_i$ below some threshold $\tau$ participate and truthfully report their bits---that has high accuracy; this accuracy, of course, is a function of the parameters of the mechanism.

Recall that $\mathcal{P}$ is a joint distribution both over bits $b \in \{0,1\}^n$ and cost vectors $c \in \mathbb{R}^n$, and that $\mathcal{C}$ denotes the marginal distribution of $\mathcal{P}$ over cost vectors $c \in \mathbb{R}^n$. Also, $\mathcal{C}_0$ denotes the marginal distribution on cost vectors drawn from $\mathcal{P}_0$, and $\mathcal{C}_1$ the marginal distribution on cost vectors drawn from $\mathcal{P}_1$ (Definition \ref{d-P0P1}); recall that these distributions are symmetric.

We now define a quantity $\tau_{\alpha,\delta}$, which represents a cost threshold that satisfies two conditions. First, it represents a threshold such that with high probability $1-\delta$ (with respect to the prior $\mathcal{P}$), at least a $1-\alpha$ fraction of individuals in the population have costs $c_i \leq \tau_{\alpha,\delta}$. Second, it is \emph{also} a threshold such that for every player $i$, conditioned on seeing either the bit $b_i = 1$ or $b_i = 0$, the probability that a random other agent $j \neq i$ has cost $c_j \leq \tau_{\alpha,\delta}$ is at least $1-\alpha$.
\begin{definition}[Threshold $\tau_{\alpha,\delta}$.]
Fix a prior $\mathcal{P}$ with a corresponding marginal cost distribution $\mathcal{C}$, and let
$$\tau^1_{\alpha,\delta} = \inf_{\tau}\left(\Pr_{c \sim \mathcal{C}}\left[|\{i : c_i \leq \tau\}| \geq (1-\alpha)n\right] \geq 1-\delta\right),$$
$$\tau^2_{\alpha} = \inf_{\tau}\left(\min\left(\Pr_{c \sim \mathcal{C}_0, j \neq i}[c_j \leq \tau], \Pr_{c \sim \mathcal{C}_1, j \neq i}[c_j \leq \tau]\right) \geq 1-\alpha\right).$$
 We define $\tau_{\alpha,\delta}$ as the larger of these two thresholds:
$$\tau^1_{\alpha,\delta} = \max(\tau^1_{\alpha,\delta}, \tau^2_{\alpha}).$$
\end{definition}

We present the following theorem.
\begin{theorem}
\label{t-mptp}
Let $\mathcal{P}$  be a symmetric prior over types $(b,c)$ satisfying the condition that for every $i$, the posterior distribution $\mathcal{P}_{b_i}$ on $(b, c)_{-i}$ given $b_i$ is conditionally independent of $c_i$. Fix a participation goal $1-\alpha$ such that $\alpha < \frac{|\po-\pz|}{2}$, a privacy parameter $\epsilon$, and the desired confidence $\delta$ on the accuracy guarantee. If the parameter $\beta$ in mechanism $\Mpna$ is chosen to be $\beta = \epsilon \tau_{\alpha, \delta/2}$, then:
\begin{enumerate}
\item The mechanism $\Mpna$ has a symmetric Bayes Nash equilibrium consisting of threshold strategies $\sigma_{\tau_{\alpha, \delta/2}}$, where
$\sigma_{\tau_{\alpha, \delta/2}}(b_i, c_i) = b_i$ whenever $c_i \leq \tau_{\alpha, \delta/2}$, \ie, all agents with cost smaller than $\tau_{\alpha, \delta/2}$ participate and truthfully report their private bit ($\sigma_{\tau_{\alpha, \delta/2}}$ can be arbitrary for agents with $c_i >\tau_{\alpha, \delta/2}$.)
\item In the equilibrium $\sigma_{\tau_{\alpha, \delta/2}}$, the estimate $\tp$ computed by $\Mpna$ is $\alpha'$-accurate for $\alpha' = \left(\frac{\ln(\delta/2)}{\epsilon n} + \alpha\right)$ with probability at least $1-\delta$, where the probability is over the draw of types from $\mathcal{P}$.
\end{enumerate}
\end{theorem}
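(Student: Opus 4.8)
The plan is to prove the two parts separately: Part~1 (the equilibrium) follows from the payoff bounds of Proposition~\ref{prop:pp-values} together with the joint-privacy guarantee of Theorem~\ref{T-diffp}, while Part~2 (accuracy) follows from a Laplace tail bound and a union bound. For Part~1, I would fix a player $i$, assume all other players use the threshold strategy $\sigma_{\tau_{\alpha,\delta/2}}$, and compute player $i$'s expected payment from each possible report. Because $B_{c, d, \rho}$ is linear in its first argument, the third part of Proposition~\ref{p-Bpq} lets me replace the random estimate $\tp_{-i}$ by its conditional mean: the expected payment from reporting $\hbi$ is exactly $B_{c, d, \rho}(p', \phbi)$, where $p' = \E[\tp_{-i} \mid b_i]$ is taken over the other players' (threshold) reports and the Laplace noise. (By the conditional-independence assumption this mean depends only on $b_i$, not on $c_i$.)

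The crux is the claim that $|p' - \pbi| \le \alpha$, where $\pz,\po$ are the all-truthful posterior means that the mechanism hard-codes. I would prove this by coupling the actual report profile with the fully truthful one using the same Laplace draw: the two values of $\frac{\bar b - \hbi}{n-1}$ differ by at most $\frac{1}{n-1}$ for each player $j \neq i$ who fails to report her true bit, and projection onto $[0,1]$ is a contraction, so $|p' - \pbi|$ is at most $\frac{1}{n-1}$ times the expected number of players $j \neq i$ with $c_j > \tau_{\alpha,\delta/2}$. The defining property of $\tau^2_\alpha$ (which conditions on $b_i$ through $\mathcal{C}_0,\mathcal{C}_1$) bounds this expected count by $\alpha(n-1)$, giving $|p' - \pbi| \le \alpha$ \emph{regardless} of how the high-cost players behave. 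With this in hand, Proposition~\ref{prop:pp-values}(\ref{eq:truthwithtolerance}) gives that truthful reporting earns expected payment at least $\beta$, while Proposition~\ref{prop:pp-values}(\ref{eq:lyingwithtolerance}) gives that misreporting earns at most $0$ (and declining earns $0$). Since $\Mpna$ is $\eps$-jointly differentially private, player $i$'s expected privacy cost is at most $\eps c_i$ for every action, so truthful participation yields utility at least $\beta - \eps c_i = \eps(\tau_{\alpha,\delta/2} - c_i)$, whereas lying or abstaining yields utility at most $0$. For every player with $c_i \le \tau_{\alpha,\delta/2}$ the former is nonnegative, so truthful reporting is a best response; because the bound $|p' - \pbi|\le\alpha$ is insensitive to the high-cost players' actions, the threshold profile (completed by best responses for the remaining players) is a symmetric Bayes--Nash equilibrium.

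For Part~2 I would bound $|\tp - \hp|$ in this equilibrium, where $\hp = \frac1n\sum_i b_i$. Writing $\tp$ as the projection onto $[0,1]$ of $\frac{\bar b}{n} = \frac1n\bigl(\sum_i \hbi + \mathrm{Lap}(1/\eps)\bigr)$, and using that projection is a contraction while $\hp \in [0,1]$, it suffices to bound $\bigl|\frac1n\sum_i \hbi - \hp\bigr| + \frac1n|\mathrm{Lap}(1/\eps)|$. The first term is controlled by the defining property of $\tau^1_{\alpha,\delta/2}$: with probability at least $1-\delta/2$ at least $(1-\alpha)n$ players have $c_i \le \tau_{\alpha,\delta/2}$ and hence report truthfully, so at most $\alpha n$ reports differ from the true bits and the term is at most $\alpha$. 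The second term is controlled by the Laplace tail $\Pr\bigl[|\mathrm{Lap}(1/\eps)| > \tfrac{\ln(2/\delta)}{\eps}\bigr] = \delta/2$, contributing at most $\frac{\ln(2/\delta)}{\eps n}$. A union bound over these two failure events (each of probability $\delta/2$) then yields $|\tp - \hp| \le \alpha + \frac{\ln(2/\delta)}{\eps n} = \alpha'$ with probability at least $1-\delta$, matching the stated accuracy (the logarithmic term appearing in absolute value).

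The main obstacle will be the robustness estimate $|p' - \pbi| \le \alpha$ of Part~1: the posterior means $\pz,\po$ that drive the payments are computed under the counterfactual that \emph{all} players report truthfully, whereas in the candidate equilibrium only the low-cost players do so. Making the incentive argument go through requires showing that this discrepancy stays within exactly the $\alpha$-tolerance that Proposition~\ref{prop:pp-values} was engineered to absorb, and that it does so uniformly over all behaviors of the high-cost players. This is precisely why the threshold must simultaneously satisfy the per-bit conditional bound $\tau^2_\alpha$ (for incentives) and the population-level bound $\tau^1_{\alpha,\delta/2}$ (for accuracy); once this coupling is established, everything else reduces to the bookkeeping sketched above.
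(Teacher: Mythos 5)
Your proposal is correct and follows essentially the same route as the paper's proof: the same threshold-strategy equilibrium argument, the same use of Proposition~\ref{prop:pp-values} to get expected payment at least $\beta$ for truth-telling and at most $0$ for lying or abstaining, the same $\epsilon c_i$ privacy-cost bound from joint differential privacy (so truth-telling nets at least $\eps(\tau_{\alpha,\delta/2}-c_i)\geq 0$), and the same two-source error decomposition (misreports via $\tau^1_{\alpha,\delta/2}$, Laplace tail bound, union bound) for accuracy. The one substantive difference is in your favor: where the paper simply asserts $|\E[\tp_{-i}\mid b_i] - \pbi|\leq \alpha$ from $\E[|S_{-i}|]\geq (1-\alpha)(n-1)$, your coupling argument (same Laplace draw in both profiles, projection onto $[0,1]$ as a contraction, then the $\tau^2_{\alpha}$ property bounding the expected number of misreporting players by $\alpha(n-1)$, uniformly over the high-cost players' behavior) makes that step rigorous and in particular justifies why the Laplace noise can be ignored there --- a point the paper's own source flags as uncertain.
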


\begin{remark}
If we want a mechanism which is $2\alpha$ accurate, it suffices to take $\epsilon = \frac{\ln(1/\delta)}{\alpha n}$, in which case our $\beta$ parameter is set as:
$$\beta = \frac{\ln(1/\delta)\cdot \tau_{\alpha, \delta/2}}{\alpha n}$$
\end{remark}

\begin{proof}
We first show (1), that the threshold strategies $\sigma_{\tau_{\alpha, \delta/2}}$ form a Bayes Nash equilibrium.

An agent $i$'s payoff if she participates in $\Mpna$ is the difference between her payment $\pi_i$ and her (dis)utility from her privacy loss, $v_i(\epsilon,o)$. To analyze agents' decisions in $\Mpna$, note first that
every agent's payoff depends on the decisions of all other agents in {\em two} ways---the payment $\pi_i = B_{c, d, \rho}(\tp_{-i},\phbi)$ uses the reports $\hat{b}_{-i}$ of all other agents in computing $\tp_{-i}$, and the cost $v_i(\epsilon,o)$ depends on $o$ which in turn is computed using the reports of all other agents.

Suppose all the agents other than agent $i$ are all following a symmetric \emph{threshold} strategy $\sigma_{j,\tau_{\alpha, \delta/2}}$  that results in truthful reporting of $\sigma_{j,\tau_{\alpha, \delta/2}}(b_j,c_j) = b_j$ whenever $c_j\leq\tau_{\alpha, \delta/2}$, and can result in arbitrary behavior otherwise.


We can now consider whether the same threshold strategy  $\sigma_{i,\tau_{\alpha, \delta/2}}$ is a best response for player $i$: if it is, then we will have shown that $\sigma_{\tau_{\alpha, \delta/2}}$ forms a symmetric Bayes Nash equilibrium.  Assume that $c_i < \tau_{\alpha, \delta/2}$ as otherwise there is nothing to show, since the strategy does not specify agents' actions if their cost exceeds the threshold.

Consider agent $i$'s incentives, given reports $\hat b_{-i}$ from the remaining participants in the mechanism. Agent $i$ chooses her strategy from three possible options: $\{b_i, 1-b_i, \bot\}$---truth-telling, lying, or not participating---by evaluating her expected utility given the strategy choices of the remaining agents, where the expectation is over the remaining agents' draws of bit, cost pairs $(b, c)_{-i}$ from $\mathcal P$, as well as the random coin tosses in $\Mpna$. Agent $i$ has an incentive to report the truth in $\Mpna$ if her expected payoff from reporting $\hbi = b_i$ is at least as large as the payoff from either reporting $\hbi = 1-b_i$ or not participating.

Let $S_{-i}$ denote the individuals, other than $i$,  who truthfully participate when using strategy $\sigma_{\tau_{\alpha, \delta/2}}$. Note that by the definition of $\sigma_{\tau_{\alpha, \delta/2}}$ we have  $\E[|S_{-i}|] \geq (1-\alpha) (n-1)$, since each individual truthfully reports with probability (over his type distribution) at least $1-\alpha$.   Thus for $b_i \in \{0, 1\}$ we have $$|\E[\tp_{-i} | b_i] - \E[\pbi |b_i]| \leq \alpha.$$
\ar{Not sure why we can ignore the Laplace noise...}

We first compute the payoff for truth-telling, $\E[B_{c,d,\rho}(\tp_{-1}, \pbi)|b_i] - \E_{o \sim M(b_i, b_{-i})}[v_i(\epsilon,o)]$ and show that it is positive.
Above we saw that $|\E[\tp_{-i} | b_i] - \E[\pbi |b_i]|\leq \alpha$.  Thus, by our choice of $c, d, \rho$ and Proposition~\ref{prop:pp-values} we have that  $$\E[B_{c,d,\rho}(\tp_{-i}, \pbi)|b_i] \geq \beta = \epsilon \tau_{\alpha, \delta/2}.$$
However, we also know that for each player $i$, $\E_{o \sim M(b_i, b_{-i})}[v_i(\epsilon,o)] \leq \eps c_i$ by the $\epsilon$-joint differential-privacy of the mechanism.  By assumption $\tau_{\alpha, \delta/2} > c_i$, and so it follows that truth-telling has a positive expected payoff for agent $i$.

We now show that the expected payoff for lying, $ \E[B_{c,d,\rho}(\tp_{-i}, p_{1-b_i})|b_i] - \E_{o \sim M(1-b_i, b_{-i})}[v_i(\epsilon,o)]$ is always non-positive.   Above we saw that $|\E[\tp_{-i} | b_i] - \E[\pbi |b_i]| \leq \alpha$.  Thus, by our choice of $c, d, \rho$ and Proposition~\ref{prop:pp-values} we have that  $\E[B_{c,d,\rho}(\tp_{-i}, p_{1-b_i})|b_i] \leq 0.$ The second term is non-positive, so it follows that the payoff for lying is also non-positive.

Finally, we look at the payoff for non-participation, which is simply  $- \E_{o \sim M(\bot, \hat b_{-i})}[v_i(\epsilon,o)]$ and non-positive by definition.

Because the utility agent $i$ receives from for truth-telling is nonnegative, but her utility from lying or non-participation is non-positive, she is always at least as well off truth-telling.



%
%


%

It remains to show (2), that with probability at least $1-\delta$ the mechanism is $\alpha'$-accurate for $\alpha' = \left(\frac{\ln(\delta/2)}{\epsilon n} + \alpha\right)$-when all players play according to equilibrium $\sigma_{\tau_{\alpha, \delta/2}}$. To show this, we note that error comes from two sources: first, non-truthful behavior. However, by the definition of $\sigma_{\tau_{\alpha, \delta/2}}$, we know that with probability at least $1-\delta/2$, all but $\alpha n/2$ people truth-tell in equilibrium: that is, except with probability $1-\delta/2$, we have $|\sum_{i=1}^n\hat{b}_i - \sum_{i=1}^n b_i| \leq \frac{\alpha n}{2}$. The second source of error is the Laplace noise added to $\hat{b}$. We have that $\Pr[|\mathrm{Lap}(1/\epsilon)| \geq t/\epsilon] = \exp(-t)$. Hence, $\Pr[|\hat{b}-\bar{b}| \geq \ln(\delta/2)/\epsilon] \leq \delta/2$. Combining these two bounds gives the claim.
\end{proof}

\begin{remark}  We remark that the above analysis easily extends to the case where  privacy-sensitive individuals might incur disutility from the outcome $o$ of the analyst's computation, as well as from the use of their data in this computation.  In this case, assume that an agent's cost $c_i$ bounds both the cost of his privacy and the difference in the agent's utility between any two outcomes.   Because the mechanism is $\epsilon$ private, each agent has at most $\epsilon$ influence on its outcome.  Hence the cost in the outcome quality for changing her bit (or not participating) is at most $c_i \epsilon$:  $\E_{o \sim M(b_i, b_{-i})}[v_i(\epsilon,o)]- \E_{o \sim M(b_{i}', b_{-i})}[v_i(\epsilon,o)] \leq \epsilon (\max_{o, o'} v_i(\epsilon,o) - v_i(\epsilon,o')) \leq c_i \epsilon$.
\end{remark}


\subsection{An alternative model of privacy costs}
\label{sec:alternative}

In the previous sections, we have focused on a conservative model of agents' costs for particular outcomes and privacy which assumes only that agent costs are upper bounded by a linear function of the differential privacy guarantee. This model was proposed by \cite{NOS12}, and is the weakest assumption seen in the differential privacy and game theory literature. In this section, we briefly consider a less-conservative, but still well-motivated, assumption on privacy cost functions proposed by Chen et al. \cite{CCKMV13}.  Under this stronger assumption, we are able to show not only that it is possible to truthfully elicit agents' private bits and compute on them accurately, but that the privacy cost of doing so tends to $0$ as $n$ grows large.
In fact, because non-private peer-prediction payments can be arbitrarily rescaled, the Chen et al. model implies that by gathering more participants, a surveyor could drive the entire cost of running our private peer prediction mechanism to zero.\footnote{This model does not incorporate a minimum base cost of approaching each potential survey participant.}

The following assumption is designed to capture the intuition that an agent should experience low privacy cost for outcomes that would induce only a small change in a Bayesian adversary's beliefs about her type (see \cite{CCKMV13} for a thorough motivation).
\begin{assumption}[\cite{CCKMV13} Privacy Cost Assumption]
\label{assumption}
We assume that for any mechanism $M:\{0,1\}^n\rightarrow \mathcal{O} \times \mathbb{R}_{\geq 0}^n$ , $\forall \hat{b} \in \{0,1\}^n$, for all players $i$, $o \in \mathcal{O}$, $\pi_{-i} \in \mathbb{R}_{\geq 0}^{n-1}$:
\[ |v_i(M, (o, \pi_{-i}), \hat{b}_i, \hat{b}_{-i}) |\leq c_i \ln \left( \max_{b'_i, b''_i \in \{0,1\}} \frac{\Pr[M(b'_i, \hat{b}_{-i}) = (o, \pi_{-i})]}{\Pr[M(b''_i, \hat{b}_{-i}) =(o, \pi_{-i})]} \right).\]
where
the probabilities are taken over the random choices of $M$.\footnote{The model proposed in \cite{CCKMV13} is somewhat more general, and allows the costs to be bounded by arbitrary functions of the log probability ratio, not just linear functions. We adopt this linear model here for simplicity, so that we do not need to modify our model of having a distribution of costs $c_i$.}
\end{assumption}

They show a useful lemma bounding the expected privacy loss of agents participating in a differentially private computation who experience privacy costs according to this assumption:
\begin{lemma}[\cite{CCKMV13} Composition Lemma]
\label{lem:chen}
In settings that obey the above cost assumption, and for mechanisms $M$ that are $\epsilon$-differentially private for $\epsilon \leq 1$, then for a player $i$ with bit $b_i$, $\forall b_{-i} \in \{0,1\}^{n-1}$, $\forall b'_i \in \{0,1\}$,
\[\E[v_i(M, M(b), b_i, b_{-i})] - \E[v_i(M, M(b'_i, b_{-i}), b_i, b_{-i})] \leq 2 c_i \epsilon (e^{\epsilon} - 1) \leq 4c_i\epsilon^2\]
\end{lemma}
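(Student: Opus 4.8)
The plan is to express the difference of expectations as a single integral against the signed difference of the two output distributions, and then to exploit Assumption~\ref{assumption} together with $\epsilon$-differential privacy in two complementary ways. Write $\omega = (o, \pi_{-i})$ for an observable outcome, and for the fixed $b_{-i}$ define
\[ \ell(\omega) = \ln\left(\max_{b_i', b_i'' \in \{0,1\}} \frac{\Pr[M(b_i', b_{-i}) = \omega]}{\Pr[M(b_i'', b_{-i}) = \omega]}\right), \]
so that Assumption~\ref{assumption} reads $|v_i(M, \omega, b_i, b_{-i})| \le c_i \ell(\omega)$. Setting $\Delta(\omega) = \Pr[M(b) = \omega] - \Pr[M(b_i', b_{-i}) = \omega]$ and noting that the cost function $v_i(M,\cdot,b_i,b_{-i})$ is the \emph{same} in both expectations (only the output law changes), the quantity to bound is
\[ \E[v_i(M, M(b), b_i, b_{-i})] - \E[v_i(M, M(b_i', b_{-i}), b_i, b_{-i})] = \int \Delta(\omega)\, v_i(M, \omega, b_i, b_{-i})\, d\omega. \]

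Next I would split the domain by the sign of $\Delta(\omega)$. On the region where $\Delta(\omega) \ge 0$ I apply the upper bound $v_i \le c_i \ell(\omega)$, and on the region where $\Delta(\omega) < 0$ I apply the lower bound $v_i \ge -c_i \ell(\omega)$; in both cases the product $\Delta(\omega)\, v_i$ is bounded above by $|\Delta(\omega)|\, c_i \ell(\omega)$. This collapses the expression to $c_i \int |\Delta(\omega)|\, \ell(\omega)\, d\omega$. The crucial point is that this sign-splitting replaces the naive triangle-inequality estimate (which would only yield $2c_i\epsilon$) by one that weights each outcome by $|\Delta(\omega)|$, the local mass of the total-variation gap.

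Finally I would bound the two factors in the integrand separately. By $\epsilon$-differential privacy the log-ratio is uniformly controlled, $\ell(\omega) \le \epsilon$, so the integral is at most $c_i \epsilon \int |\Delta(\omega)|\, d\omega = 2 c_i \epsilon \cdot \mathrm{TV}(M(b), M(b_i', b_{-i}))$. To bound the total-variation distance, note that on the region $\Delta(\omega) > 0$ differential privacy gives $\Pr[M(b) = \omega] \le e^{\epsilon}\Pr[M(b_i', b_{-i}) = \omega]$, hence $\Delta(\omega) \le (e^\epsilon - 1)\Pr[M(b_i', b_{-i}) = \omega]$; integrating over that region yields $\mathrm{TV} \le e^\epsilon - 1$. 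Combining gives the bound $2 c_i \epsilon (e^\epsilon - 1)$, and the second inequality follows from the elementary fact that $e^\epsilon - 1 \le 2\epsilon$ for $\epsilon \le 1$.

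The main obstacle is the middle step: getting past the trivial $2c_i\epsilon$ bound requires recognizing that it is the \emph{difference} of expectations, not each expectation in isolation, that must be controlled, and that the sign-splitting argument converts the problem into bounding $c_i\epsilon$ times the total-variation distance, which differential privacy controls by the second-order quantity $(e^\epsilon - 1) = O(\epsilon)$ rather than by a constant. Everything else is routine. One technical remark: since payments are real-valued the outcome space is continuous, so the sums over $\omega$ should be read as integrals against the appropriate base measure; this is cosmetic and does not affect the argument.
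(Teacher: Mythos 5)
Your proof is correct: the decomposition of the difference of expectations as $\int \Delta(\omega)\, v_i\, d\omega$, the sign-splitting to get $c_i \int |\Delta|\,\ell\, d\omega$, the uniform bound $\ell(\omega) \le \epsilon$ from differential privacy, the total-variation bound $\mathrm{TV} \le e^\epsilon - 1$, and the elementary inequality $e^\epsilon - 1 \le 2\epsilon$ on $[0,1]$ all check out, and together they yield exactly the claimed $2c_i\epsilon(e^\epsilon-1) \le 4c_i\epsilon^2$. Note that this paper does not prove the lemma at all --- it imports it from \cite{CCKMV13} --- and your argument is essentially the standard proof from that cited work, so there is nothing further to compare against here.
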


With the above assumption, we can achieve an accurate Bayes-Nash equilibrium while scaling payments down with $\epsilon$ at a rate of $\epsilon^2$, rather than linearly. As we will see, this will allow us to drive our total costs down to zero.

We first show an analogue of Theorem \ref{t-mptp}. The only difference is that with our new stronger assumption on agent privacy costs, we can obtain a Bayes Nash equilibrium while taking the $\beta$ parameter to be substantially smaller than before.

\begin{theorem}
Suppose agents' privacy costs satisfy Assumption \ref{assumption}. Let $\mathcal{P}$  be a symmetric prior over types $(b,c)$ satisfying the condition that for every $i$, the posterior distribution $\mathcal{P}_{b_i}$ on $(b, c)_{-i}$ given $b_i$ is conditionally independent of $c_i$. Fix a participation goal $1-\alpha$ such that $\alpha < \frac{|\po-\pz|}{2}$, a privacy parameter $\epsilon$, and the desired confidence $\delta$ on the accuracy guarantee. If the parameter $\beta$ of the mechanism is chosen to be $\beta = 4\epsilon^2 \tau_{\alpha, \delta/2}$, then:
\begin{enumerate}
\item The mechanism $\Mpna$ has a symmetric Bayes Nash equilibrium consisting of threshold strategies $\sigma_{\tau_{\alpha, \delta/2}}$, where
$\sigma_{\tau_{\alpha, \delta/2}}(b_i, c_i) = b_i$ whenever $c_i \leq \tau_{\alpha, \delta/2}$, \ie, all agents with cost smaller than $\tau_{\alpha, \delta/2}$ participate and truthfully report their private bit ($\sigma_{\tau_{\alpha, \delta/2}}$ can be arbitrary for agents with $c_i >\tau_{\alpha, \delta/2}$.)
\item In the equilibrium $\sigma_{\tau_{\alpha, \delta/2}}$, the estimate $\tp$ computed by $\Mpna$ is $\alpha'$-accurate for $\alpha' = \left(\frac{\ln(\delta/2)}{\epsilon n} + \alpha\right)$ with probability at least $1-\delta$, where the probability is over the draw of types from $\mathcal{P}$.
\end{enumerate}
\end{theorem}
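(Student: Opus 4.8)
The plan is to mirror the proof of Theorem~\ref{t-mptp} verbatim wherever the argument does not touch the privacy-cost model, and to replace only the single step where we compare an agent's payment gain against her change in privacy cost. Since the mechanism $\Mpna$, its payment rule, and the payoff bounds of Proposition~\ref{prop:pp-values} are all unchanged, the payment analysis carries over identically: recomputing $c,d,\rho$ in step~7 with the new value $\beta = 4\epsilon^2\tau_{\alpha,\delta/2}$, and using that $|\E[\tp_{-i}\mid b_i] - \E[\pbi\mid b_i]|\leq\alpha$ for an agent facing the threshold strategy $\sigma_{\tau_{\alpha,\delta/2}}$, Proposition~\ref{prop:pp-values} gives $\E[B_{c,d,\rho}(\tp_{-i},\pbi)\mid b_i]\geq\beta$ for truthful reporting and $\E[B_{c,d,\rho}(\tp_{-i},p_{1-b_i})\mid b_i]\leq 0$ for lying, exactly as before.

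The essential change is in bounding the privacy term. In Theorem~\ref{t-mptp} we used the crude absolute bound $\E[v_i]\leq\epsilon c_i$ and required the payment $\beta$ to dominate it; here I would instead bound the \emph{difference} in expected privacy cost between truthful reporting and any deviation. First I would observe that although $\Mpna$ is only $\epsilon$-jointly differentially private, the cost in Assumption~\ref{assumption} and the log-ratio in its bound are functions of the observable output $(o,\pi_{-i})$ alone, and joint differential privacy bounds exactly this ratio by $\exp(\epsilon)$; hence Lemma~\ref{lem:chen} applies with the observable output playing the role of $M$'s output. Assuming $\epsilon\leq 1$ as required by the lemma, for any alternative report $b_i'$ we then have
\[
\E[v_i(M,M(b),b_i,b_{-i})] - \E[v_i(M,M(b_i',b_{-i}),b_i,b_{-i})]\leq 4c_i\epsilon^2.
\]

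Combining the two pieces handles all three of agent $i$'s options. For lying, the gain in expected payment is at least $\beta-0=\beta$, while the extra expected privacy cost of truth-telling relative to lying is at most $4c_i\epsilon^2$; since $c_i<\tau_{\alpha,\delta/2}$ we get $4c_i\epsilon^2 < 4\epsilon^2\tau_{\alpha,\delta/2}=\beta$, so truthful reporting is strictly preferred. For non-participation, the mechanism treats agent $i$ as reporting $0$ and pays her nothing, so the payment gain from truth-telling is again at least $\beta$, while applying the displayed inequality with $b_i'=0$ bounds the extra privacy cost of truth-telling by $4c_i\epsilon^2<\beta$; thus truth-telling again dominates, establishing part~(1). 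Part~(2) is untouched by the privacy model—accuracy depends only on the realized reports and the Laplace noise—so it follows by exactly the argument given for part~(2) of Theorem~\ref{t-mptp}.

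The main obstacle I anticipate is handling the non-participation option cleanly: unlike in Theorem~\ref{t-mptp}, where non-participation gave manifestly non-positive utility because privacy costs were treated as non-negative, Assumption~\ref{assumption} only controls the \emph{magnitude} of $v_i$, so I cannot appeal to a sign. The fix is to route the non-participation comparison through Lemma~\ref{lem:chen} as well, identifying non-participation with the report $b_i'=0$ for the purpose of the privacy cost and noting that the forgone payment only helps truth-telling in the comparison. I would also flag the $\epsilon\leq 1$ hypothesis needed for the $2c_i\epsilon(e^{\epsilon}-1)\leq 4c_i\epsilon^2$ step, and confirm that it is the joint-differential-privacy guarantee, rather than full differential privacy, that legitimately feeds into Lemma~\ref{lem:chen}.
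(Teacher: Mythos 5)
Your proposal is correct and follows essentially the same route as the paper's own proof: the payment bounds from Proposition~\ref{prop:pp-values} are reused unchanged with the new $\beta = 4\epsilon^2\tau_{\alpha,\delta/2}$, the privacy term is controlled by bounding the \emph{difference} in expected privacy cost via Lemma~\ref{lem:chen} (giving $4c_i\epsilon^2 \leq \beta$), non-participation is handled by identifying it with the report $\hat{b}_i = 0$, and part~(2) is inherited verbatim from Theorem~\ref{t-mptp}. Your two flagged points---that Lemma~\ref{lem:chen} requires $\epsilon \leq 1$, and that it is the joint-differential-privacy bound on the observable output $(o,\pi_{-i})$ that feeds into Assumption~\ref{assumption}'s log-ratio---are details the paper glosses over, and addressing them explicitly only strengthens the argument.
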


\begin{remark}
Note that this theorem differs from our previous theorem in how it sets $\beta$. Since we can again take $\epsilon = \frac{\ln(1/\delta)}{\alpha n}$, we have
$$\beta = \frac{4\ln(1/\delta)^2\cdot \tau_{\alpha, \delta/2}}{\alpha^2 n^2}$$
\end{remark}

\begin{proof}
We first show (1), that the threshold strategies $\sigma_{\tau_{\alpha, \delta/2}}$ form a Bayes Nash equilibrium.

Suppose the agents $j \neq i$ are all following a symmetric \emph{threshold} strategy $\sigma_{j,\tau_{\alpha, \delta/2}}$  that results in truthful reporting of $\sigma_{j,\tau_{\alpha, \delta/2}}(b_j,c_j) = b_j$ whenever $c_j\leq\tau_{\alpha, \delta/2}$, and can result in arbitrary behavior otherwise.

We can now consider whether the same threshold strategy  $\sigma_{i,\tau_{\alpha, \delta/2}}$ is a best response for player $i$: if it is, then we will have shown that $\sigma_{\tau_{\alpha, \delta/2}}$ forms a symmetric Bayes Nash equilibrium.  Assume that $c_i < \tau_{\alpha, \delta/2}$ as otherwise there is nothing to show.

An agent's $i$'s payoff if she participates in $\Mpna$ is the difference between her payment $\pi_i$ and her (dis)utility from her privacy loss, $v_i(\epsilon,o)$. We will first show that an agent's utility for participating and truth-telling is higher than her payment for participating and misrepresenting her bit.

An agent $i$ has a higher payoff for truth-telling than for lying if:
$$\E[B_{c,d,\rho}(\tp_{-i}, \pbi)|b_i]- \E[v_i(M, M(\hat{b}), \hat{b}_i, \hat{b}_{-i})] \geq \E[B_{c,d,\rho}(\tp_{-i}, p_{b_{1-i}})|b_i]- \E[v_i(M, M(\hat{b}), 1-\hat{b}_i, \hat{b}_{-i})]$$

Or equivalently,
$$\E[B_{c,d,\rho}(\tp_{-i}, \pbi)|b_i]- \E[B_{c,d,\rho}(\tp_{-i}, p_{b_{1-i}})|b_i] \geq  \E[v_i(M, M(\hat{b}), \hat{b}_i, \hat{b}_{-i})] - \E[v_i(M, M(\hat{b}), 1-\hat{b}_i, \hat{b}_{-i})].$$

By Assumption \ref{assumption} and Lemma \ref{lem:chen}, the right hand side of the above inequality is at most $4c_i\epsilon^2$, since the mechanism is $\epsilon$-differentially private. It remains to bound the left-hand side.



Let $S_{-i}$ denote the individuals, other than $i$,  who truthfully participate when using strategy $\sigma_\tau$. Note that by the definition of $\sigma_{\tau_{\alpha, \delta/2}}$ we have  $\E[|S_{-i}|] \geq (1-\alpha) (n-1)$, since each individual truthfully reports with probability (over his type distribution) at least $1-\alpha$.   Thus for $b_i \in \{0, 1\}$,
 $$|\E[\tp_{-i} | b_i] - \E[\pbi |b_i]| \leq \alpha.$$


Above we saw that $|\E[\tp_{-i} | b_i] - \E[\pbi |b_i]|\leq \alpha$.  Thus, by our choice of $c, d, \rho$ and Proposition~\ref{prop:pp-values},  $$\E[B_{c,d,\rho}(\tp_{-i}, \pbi)|b_i] \geq \beta = 4\epsilon^2 \tau_{\alpha, \delta/2}.$$

By assumption $\tau_{\alpha, \delta/2} \geq c_i$, and so if we can show that the expected payment that a player receives for misreporting his bit is non-positive, we will be done.

We now show that the expected payoff for lying, $ \E[B_{c,d,\rho}(\tp_{-i}, p_{1-b_i})|b_i]$ is always non-positive.   Above we saw that $|\E[\tp_{-i} | b_i] - \E[\pbi |b_i]| \leq \alpha$.  Thus, by our choice of $c, d, \rho$ and Proposition~\ref{prop:pp-values} we have that  $\E[B_{c,d,\rho}(\tp_{-i}, p_{1-b_i})|b_i] \leq 0.$ The second term is non-positive, so it follows that the payoff for lying is also non-positive.

We similarly must show that the utility associated with participation and truth-telling is greater than the utility of non-participation. That is:

$$\E[B_{c,d,\rho}(\tp_{-i}, \pbi)|b_i]- \E[v_i(M, M(\hat{b}), \hat{b}_i, \hat{b}_{-i})] \geq  - \E[v_i(M, M(\hat{b}), \bot, \hat{b}_{-i})]$$
This statement follows from analysis analogous to showing that truth-telling outperforms lying, since non-participation induces the same privacy cost as reporting $\hat{b}_i = 0$, but results in no payment.



It remains to show (2), that with probability at least $1-\delta$ the mechanism is $\alpha'$-accurate for $\alpha' = \left(\frac{\ln(\delta/2)}{\epsilon n} + \alpha\right)$-when all players play according to equilibrium $\sigma_{\tau_{\alpha, \delta/2}}$. To show this, we note that error comes from two sources: first, non-truthful behavior. However, by the definition of $\sigma_{\tau_{\alpha, \delta/2}}$, we know that with probability at least $1-\delta/2$, all but $\alpha n/2$ people truth-tell in equilibrium: that is, except with probability $1-\delta/2$, we have $|\sum_{i=1}^n\hat{b}_i - \sum_{i=1}^n b_i| \leq \frac{\alpha n}{2}$. The second source of error is the Laplace noise added to $\hat{b}$. We have that $\Pr[|\mathrm{Lap}(1/\epsilon)| \geq t/\epsilon] = \exp(-t)$. Hence, $\Pr[|\hat{b}-\bar{b}| \geq \ln(\delta/2)/\epsilon] \leq \delta/2$. Combining these two bounds gives the claim.
 \end{proof}

 Finally, we show that the total payment that the surveyor need produce tends to zero as the population size $n$ grows large. In particular, this means that the marginal cost of preserving privacy while doing peer prediction tends to zero as $n$ grows large.

 \begin{theorem}
 Fix any parameters $\alpha$ and $\delta$. In the symmetric Bayes Nash equilibrium defined by threshold strategies $\sigma_{\tau_{\alpha, \delta/2}}$, the total expected cost incurred by the surveyor:
 $$\sum_{i=1}^n \E[\pi_i] \leq \frac{4\ln(1/\delta)^2\cdot \tau_{\alpha, \delta/2}}{\alpha^2 n}\cdot \left(1 + 4\frac{\alpha}{2(\po-\pz)-4\alpha}\right) = O\left(\frac{1}{n}\right)$$
 \end{theorem}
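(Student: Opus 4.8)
The plan is to bound the total expected cost $\sum_{i=1}^n \E[\pi_i]$ by bounding the expected payment to each individual agent, and then summing. Since the equilibrium is symmetric and each agent either truthfully participates (when $c_i \leq \tau_{\alpha,\delta/2}$) or takes some arbitrary action otherwise, I would first observe that non-participating agents contribute zero payment, so it suffices to upper-bound $\E[\pi_i]$ for a participating agent. For such an agent reporting truthfully, the payment is $B_{c,d,\rho}(\tp_{-i}, p_{b_i})$, and the expected payment conditioned on $b_i$ is controlled by Proposition~\ref{prop:pp-values}, part~\ref{eq:paymentbound}.

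The key step is to apply the payment upper bound~\ref{eq:paymentbound}: for any $\alpha'$ and $p'$ with $|p' - p_{b_i}| < \alpha'$, we have $B_{c,d,\rho}(p', p_{b_i}) \leq \beta + 2\rho(\alpha+\alpha')|\pz - \po|$. First I would use a concentration argument (as in part~(2) of the preceding theorem, combining the participation bound with the Laplace tail) to argue that $\tp_{-i}$ concentrates around $\E[\tp_{-i}\mid b_i]$, so that $|\tp_{-i} - p_{b_i}|$ is small with high probability; equivalently I would bound the \emph{expected} deviation $|\tp_{-i} - p_{b_i}|$ to play the role of $\alpha+\alpha'$. Since the map $p \mapsto B_{c,d,\rho}(p, p_{b_i})$ is Lipschitz (Proposition~\ref{p-Bpq}, part~\ref{e-Bpq2}) and $\tp_{-i}$ lives in the bounded interval $[0,1]$, the expected payment is at most $\beta$ plus a correction term proportional to $\rho \cdot |\pz - \po| \cdot (\text{expected deviation})$. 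Substituting $\beta = 4\epsilon^2 \tau_{\alpha,\delta/2}$ with $\epsilon = \frac{\ln(1/\delta)}{\alpha n}$, the leading term becomes $\frac{4\ln(1/\delta)^2 \tau_{\alpha,\delta/2}}{\alpha^2 n^2}$ per agent.

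I would then unwind the definition of $\rho = \frac{\beta}{2(\po-\pz)^2 - 4\alpha|\pz-\po|}$ to see that the correction term $2\rho(\alpha+\alpha')|\pz-\po|$ equals $\beta \cdot \frac{(\alpha+\alpha')|\pz-\po|}{(\po-\pz)^2 - 2\alpha|\pz-\po|}$, which simplifies to a factor of the form $\frac{\alpha+\alpha'}{(\po-\pz) - 2\alpha}$ times $\beta$ (after canceling $|\pz-\po|$). Choosing the deviation parameter $\alpha'$ to match the accuracy analysis recovers the factor $\bigl(1 + 4\frac{\alpha}{2(\po-\pz)-4\alpha}\bigr)$ in the statement. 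Summing the per-agent bound $\E[\pi_i] \leq \frac{4\ln(1/\delta)^2 \tau_{\alpha,\delta/2}}{\alpha^2 n^2}\bigl(1 + 4\frac{\alpha}{2(\po-\pz)-4\alpha}\bigr)$ over $n$ agents yields exactly the claimed $\sum_{i=1}^n \E[\pi_i]$, and the explicit $\frac{1}{n}$ dependence gives $O(1/n)$.

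The main obstacle I expect is the concentration step: the preceding theorem only bounds $|\E[\tp_{-i}\mid b_i] - \E[p_{b_i}\mid b_i]| \leq \alpha$ \emph{in expectation}, whereas to bound the \emph{expected payment} I need to control $\E[|\tp_{-i} - p_{b_i}| \mid b_i]$, the expected absolute deviation of the realized (noisy, post-processed) quantity rather than its mean. Because $\tp_{-i}$ is bounded in $[0,1]$ and the scoring rule is Lipschitz, one can afford a crude bound here, but care is needed to ensure the deviation contributions from both the missing participants ($\alpha$) and the Laplace noise are folded correctly into a single $\alpha'$-type term so that the algebra collapses to the stated closed form rather than leaving a residual $O(1/(\epsilon n))$ term that would spoil the clean expression.
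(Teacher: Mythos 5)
Your skeleton matches the paper's proof: bound each agent's expected payment in equilibrium by the truthful payment, apply Proposition~\ref{prop:pp-values} part~\ref{eq:paymentbound}, unwind $\rho$ and $\beta = 4\epsilon^2\tau_{\alpha,\delta/2}$ with $\epsilon = \ln(1/\delta)/(\alpha n)$, and sum over $n$ agents. But the step you flag as your ``main obstacle'' is a genuine gap, and the resolution is not more careful concentration --- it is that no concentration is needed at all. The payment $B_{c,d,\rho}(p,q)$ is \emph{affine} in its first argument $p$, so by Proposition~\ref{p-Bpq} part~\ref{e-Bpq3} the expectation passes inside exactly: $\E[B_{c,d,\rho}(\tp_{-i}, \pbi)\mid b_i] = B_{c,d,\rho}(\E[\tp_{-i}\mid b_i], \pbi)$. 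The equilibrium analysis already gives $|\E[\tp_{-i}\mid b_i] - \pbi|\leq \alpha$, so Proposition~\ref{prop:pp-values} part~\ref{eq:paymentbound} applies with $p' = \E[\tp_{-i}\mid b_i]$ and $\alpha' = \alpha$, yielding $\beta + 4\rho\alpha|\pz-\po|$ per agent with no residual term. This is exactly what the paper does; the quantity $\E[|\tp_{-i} - \pbi| \mid b_i]$ never enters.

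Worse, the concentration you propose as a substitute is false in general, so your route as described does not prove the stated inequality. The prior $\mathcal{P}$ is only exchangeable, not a product distribution: e.g., bits could be i.i.d.\ Bernoulli$(\theta)$ conditioned on a random $\theta$, in which case $\tp_{-i}$ concentrates around the realized $\theta$, not around $\pbi = \E[\theta \mid b_i]$, and $\E[|\tp_{-i} - \pbi| \mid b_i]$ is $\Theta(1)$ independently of $n$. Feeding that (or even the crude bound $1$, which you suggest you can afford) into the Lipschitz inequality of Proposition~\ref{p-Bpq} part~\ref{e-Bpq2} leaves an extra term of order $\rho|\po-\pz|$, on top of the $O(\rho/(\epsilon n))$ Laplace term you already worried about. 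Since $\rho \propto \beta$, the total would still be $O(1/n)$, so the asymptotic claim survives your approach --- but the explicit bound $\frac{4\ln(1/\delta)^2\tau_{\alpha,\delta/2}}{\alpha^2 n}\bigl(1 + 4\frac{\alpha}{2(\po-\pz)-4\alpha}\bigr)$ in the theorem statement would not, because your correction term does not collapse to the factor $4\rho\alpha|\pz-\po|$. The fix is to drop the concentration step entirely and invoke linearity.
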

 \begin{proof}
Recall that in equilibrium, players maximize their payoff by truthtelling. Therefore:
$$\sum_{i=1}^n \E[\pi_i] \leq \sum_{i=1}^n \E[B_{c,d,\rho}(\tp_{-i}, \pbi)|b_i]$$
However, recall that we have shown that $|\E[\tp_{-i} | b_i] - \E[\pbi |b_i]| \leq \alpha$. Therefore, we can invoke proposition \ref{prop:pp-values} (3) to conclude:
\begin{eqnarray*}
\E[B_{c,d,\rho}(\tp_{-i}, \pbi)|b_i] &\leq& \beta + 4 \rho \alpha|\pz - \po| \\
&=& \beta\cdot\left(1 + 4\frac{\alpha|\pz - \po|}{2(\po-\pz)^2-4\alpha|\pz - \po|}\right) \\
&=& \frac{4\ln(1/\delta)^2\cdot \tau_{\alpha, \delta/2}}{\alpha^2 n^2}\cdot \left(1 + 4\frac{\alpha}{2(\po-\pz)-4\alpha}\right)
\end{eqnarray*}
Therefore:
$$\sum_{i=1}^n \E[\pi_i] \leq  \frac{4\ln(1/\delta)^2\cdot \tau_{\alpha, \delta/2}}{\alpha^2 n}\cdot \left(1 + 4\frac{\alpha}{2(\po-\pz)-4\alpha}\right) = O\left(\frac{1}{n}\right). \qed$$
 \end{proof}

\section{Conclusion}
In this paper we have shown how to \emph{accurately} conduct a survey when agents have costs associated with their privacy loss that might result from the survey outcome, even when the surveyor has no ability to verify agents' private data. This result holds under even an extremely mild assumption on agents' privacy costs---that they are simply \emph{upper bounded} as a function of the privacy parameter $\epsilon$ when the mechanism satisfies $\epsilon$-differential privacy. Under a stronger, but still well-motivated assumption on privacy costs, we have further shown that remarkably, the \emph{cost} of conducting such a survey (and, in particular, the marginal cost of compensating for privacy losses) can be driven to zero simply by increasing the number of participants.

However, compared to past literature on the sensitive surveyors problem, our results come at a price. Because we cannot verify agents' bits, we inherit from the peer-prediction literature that truth-telling results in just one of possibly many Bayes Nash equilibria of our mechanism. Although the truth-telling equilibrium might be considered ``focal'', and hence likely to occur given a lack of coordination by the agents, in the presence of coordination, we might be worried that agents can collude and coordinate on a different equilibrium, which might result in lower privacy costs for them. The problem of eliminating these bad equilibria is an exciting direction for future work.

\bibliographystyle{alpha}
\bibliography{refs}

\end{document}